\documentclass[journal]{IEEEtran}
\usepackage{amsmath,amsfonts, amssymb}
\usepackage{algorithmic}
\usepackage{algorithm}
\usepackage{array}
\usepackage[caption=false,font=normalsize,labelfont=sf,textfont=sf]{subfig}
\usepackage{textcomp}
\usepackage{stfloats}
\usepackage{url}
\usepackage{verbatim}
\usepackage{graphicx}
\usepackage{cite}
\usepackage{amssymb}
\usepackage{amsmath}
\usepackage{hyperref}
\usepackage{graphicx}
\usepackage{subfig}
\usepackage{booktabs}
\usepackage{tabularx}
\usepackage{array}
\usepackage{amsthm}
\newtheorem{definition}{Definition}
\newtheorem{assumption}{Assumption}
\newtheorem{lemma}{Lemma}
\newtheorem{proposition}{Proposition}

\newtheorem{remark}{Remark}

\begin{document}

\title{
Telling Neural Network What We Know: Weight Configuration for RNN and Its Application to NextG Real-Time Neural Receiver
}

\author{Ummay Sumaya Khan, Lizhong Zheng, Lingjia Liu \vspace{-0.3in} 

\thanks{U. Khan, and L. Liu are with Wireless@Virginia Tech, the Bradley Department of Electrical and Computer Engineering, Virginia Tech. L. Zheng is with the Department of Electrical Engineering and Computer Science, Massachusetts Institute of Technology, Cambridge, MA 02139 USA.

The work is supported in parts by the US National Science Foundation under grants CNS-2003059 and ECCS-2128594. The corresponding author is Lingjia Liu (ljliu@ieee.org).}
}

\maketitle

\begin{abstract}
AI-enabled receivers offer a promising approach to adaptive physical-layer processing, but conventional neural receivers often require extensive offline training and computationally expensive backpropagation, limiting their suitability for low-latency cellular systems. Moreover, classical MIMO-OFDM receivers rely on accurate channel estimation, leading to substantial pilot overhead. This paper introduces \textbf{skip-RNN}, a lightweight skip-augmented recurrent neural receiver that performs online time-domain receive processing using fixed recurrent and skip branches with an adaptive linear readout. We show that the skip-RNN can be interpreted as a structured deconvolution filter whose recurrent and skip branches realize IIR and FIR basis functions, respectively. We develop a domain-knowledge informed weight-configuration framework for the skip-RNN by exploiting its structural correspondence with the target stable and causal inverse filter of mixed-phase channels. The dominant IIR structure extracted from estimated channel statistics is used to configure the recurrent dynamics, while the non-minimum-phase component is addressed through the skip branch and a statistical FIR readout prior incorporated into prior-centered regularized online readout adaptation. We further derive an approximation-error bound that characterizes the resulting configured neural-receiver representation. Simulations over standardized 3GPP channels demonstrate the effectiveness of the configured neural receiver under extremely low pilot overhead.
\end{abstract}

\begin{IEEEkeywords}
Recurrent neural network (RNN), Echo state network (ESN), MIMO-OFDM receiving process, non-line-of-sight (NLoS) channel, non-minimum phase, skip-RNN.
\end{IEEEkeywords}

\section{Introduction}
\label{sec:intro}

Multiple-input multiple-output (MIMO) with orthogonal frequency division multiplexing (OFDM) is a key technology in modern wireless systems, but reliable symbol detection remains challenging under multipath fading and imperfect channel knowledge. Conventional receivers typically employ linear minimum mean square error (LMMSE) detection using channel state information (CSI) estimated from demodulation reference signals (DMRS)~\cite{3gpp_mmse}. Their performance, however, can degrade substantially when channel estimates are inaccurate, particularly at low signal-to-noise ratios (SNRs).

Machine learning has therefore been widely explored for joint channel estimation and MIMO detection, including multi-layer perceptrons (MLPs)~\cite{ye2017power}, recurrent and long short-term memory (LSTM) networks~\cite{liao2019deep,lyu2020deep}, convolutional neural networks (CNNs)~\cite{honkala2021deeprx,zhao2021deep}, and generative adversarial networks (GANs)~\cite{yi2020deep}. Model-driven architectures such as OAMP-Net~\cite{he2018model}, DetNet~\cite{samuel2019learning}, and MMNet~\cite{khani2020adaptive} further incorporate trainable parameters into classical iterative detectors. Although recurrent neural networks are well suited to sequential signal processing~\cite{goodfellow2016deep}, conventional RNN training relies on backpropagation through time, resulting in substantial computational and data requirements~\cite{letter_ummay}.

These requirements limit the practicality of learning-based receivers in rapidly varying wireless environments. Channel conditions, spatial rank, scheduling, and interference may change within millisecond-scale TTIs~\cite{liu_magazine}, requiring fast online adaptation with limited over-the-air training samples. At the same time, excessive reference-signal overhead reduces spectral efficiency~\cite{3gpp_pilot}. Moreover, purely data-driven receivers often relearn known system structure, increasing complexity and reducing interpretability. Incorporating available domain knowledge into the receiver design can therefore reduce training overhead, improve generalization, and enable more efficient online adaptation~\cite{xu2023detect}.

To reduce the training burden of conventional recurrent networks, reservoir-based receivers such as Echo State Networks (ESNs)~\cite{jaeger2001echo} use fixed input and recurrent weights and train only a linear readout online. This substantially lowers training complexity and data requirements. Windowed Echo State Networks (WESNs)~\cite{zhou2020learning,zhou2020rcnet,xu2022rc,xu2023detect} further augment the reservoir with windowed input buffers and perform detection at the subframe or TTI level, making them well suited to rapidly varying wireless systems with limited pilot resources. Our recent work has also shown that reservoir-based receivers can outperform conventional RNNs under sparse pilot allocation.

Despite these advantages, existing WESN receivers remain largely black-box designs with limited interpretation of their underlying filtering structure. Prior work~\cite{jere2024universal} configured ESN recurrent weights using the target first-order linear time-invariant (LTI) system, while~\cite{jere2023channel,10974467} showed that channel-aware reservoir configuration can achieve optimal BER for strictly minimum-phase MIMO channels under perfect channel-statistics knowledge. However, principled weight configuration for skip-augmented reservoir receivers operating over realistic non-minimum-phase (NMP), non-line-of-sight (NLoS) channels with estimated channel remains unexplored.

Motivated by these limitations, we introduce \emph{skip-RNN}, a lightweight and training efficient neural receiver, that combines fixed recurrent dynamics with a finite window of received samples connected directly to a trainable linear readout. We first establish a signal-processing interpretation of skip-RNN architecture. For mixed-phase channels, the stable causal LMMSE filter can be represented, for a finite decision delay, by a parallel combination of an IIR component and a finite-length FIR component. The skip-RNN realizes the same structure: the recurrent branch generates stable IIR basis functions through its recurrent modes, while the skip branch provides a complete finite-length causal FIR delay basis. For MIMO-OFDM receive processing, the online readout then determines a linear combination of these fixed basis functions to approximate the target causal channel inverse. This correspondence provides a principled explanation for the suitability of the skip-RNN receiver for MIMO-OFDM over NLoS mixed-phase channels, rather than treating the neural receiver architecture as a purely black-box learning model.

The time-domain interpretation of the skip-RNN shows that it operates as a causal MIMO deconvolution filter whose impulse response consists of an infinite-duration recurrent component and a finite-duration skip component. This structured representation motivates configuring its fixed basis functions using available channel-domain knowledge rather than relying on conventional random reservoir initialization. Taking the causal LMMSE receiver as the target function, statistics of the estimated LMMSE inverse ensemble can be exploited to construct basis functions that are better aligned with the target function space before online readout adaptation. Specifically, the dominant low-dimensional structure of the causal IIR ensemble is extracted from the statistics and approximated using stable rational functions, whose poles and associated coefficients are used to configure the recurrent and input weights. In parallel, the FIR component is used to determine the skip-window length and construct a statistical FIR readout prior, which is incorporated through prior-centered regularized least-squares (LS) adaptation within each TTI. Consequently, only the linear readout is adapted online, substantially reducing the training complexity compared with fully trained recurrent receivers. The introduced neural receiver is evaluated over standardized 3GPP NLoS channel models, where multipath propagation produces mixed-phase channel responses, under a limited pilot overhead. The main contributions of this work are summarized as follows:

\begin{itemize}
\item We develop an interpretable signal-processing framework for time-domain causal inverse filter of mixed-phase channels. Using Wiener--Hopf factorization and finite decision delay, we show that the causal inverse filter admits an equivalent parallel IIR--FIR representation, consisting of stable rational IIR modes and a finite causal FIR component. This representation provides the analytical basis for the neural receiver architecture and the weight configuration process using prior channel statistics.

\item We formulate the skip-RNN neural receiver as a structured time-domain filtering architecture and show that its recurrent and skip branches naturally realize IIR and FIR basis functions. A geometric interpretation further shows that the fixed basis can be configured with prior channel knowledge based on the causal inverse filter structure, while online readout training projects the target receiver response onto the configured IIR--FIR function space.

\item We introduce a channel-informed weight-configuration framework that exploits estimated channel statistics as domain knowledge. The dominant low-dimensional structure is extracted through eigenspace decomposition and approximated by stable rational functions to configure the recurrent poles and input weights. The FIR component is used to determine the skip-window length and construct a statistical FIR readout prior, which is incorporated through prior-centered regularized LS adaptation per TTI. In addition, we derive an approximation-error bound for the configured neural receiver, providing a theoretical characterization of its representation capability.

\item We evaluate the configured skip-RNN over standardized 3GPP NLoS MIMO-OFDM channel models. The results demonstrate improved BER performance over the model-based LMMSE and sphere-decoding receivers using estimated CSI, as well as the learning-based SDNet and randomly initialized WESN baselines.
\end{itemize}

The remainder of this paper is organized as follows. Section~\ref{sec:preliminaries} introduces the MIMO-OFDM receiver model and derives the parallel IIR-FIR representation of the causal LMMSE inverse filter. Section~\ref{sec:online_skip_rnn} presents the skip-RNN neural receiver from signal-processing and geometric perspectives and formulates the readout approximation error as a weight-configuration problem. Section~\ref{sec:weight_configuration} develops the channel-informed weight-configuration method and derives the corresponding approximation-error bound. Section~\ref{sec:perform_eval} describes the simulation setup and evaluates the skip-RNN neural receiver over standardized 3GPP channel models. Finally, Section~\ref{sec:conclusion} concludes the paper.

\textbf{Notations:} 
Throughout this paper, subscripts $(\cdot)_{t}$ and $(\cdot)_{f}$ indicate time domain and frequency domain signals, respectively. For all other variables, non-bold letters denote scalars, bold lowercase letters denote vectors, and bold uppercase letters denote matrices. $\mathbb{R}$ and $\mathbb{C}$ represent the real and complex number spaces respectively and other sets are represented by calligraphic letters. The operator $(\cdot)^{H}$ represents the conjugate transpose, $(\cdot)^{\dagger}$ denotes the paraconjugate and \(*\) denotes discrete-time convolution. The quantities corresponding to skip-RNN are indexed by the $(\cdot)_\mathrm{SR}$.


\section{MIMO-OFDM Receiver Processing for Mixed Phase Wireless Channels}
\label{sec:preliminaries}

This section first summarizes the general process for MIMO-OFDM receive processing and then introduces the stable and causal receive processing for the channel-inversion framework. As illustrated in Fig.~\ref{fig:mimo_detect}, the received signal is processed using either a learning-based recurrent neural receiver or a conventional model-based LMMSE detector. Since the neural receiver approximates the underlying channel-inversion problem, we also demonstrate how the desired inverse filtering operation can be represented in a stable and causal form, particularly for non-minimum-phase channels. This representation provides the basis for understanding how the channel inverse domain knowledge can subsequently be utilized to configure the neural receiver.

\subsection{General Framework for MIMO-OFDM Receive Processing}
\label{subsec:mimo_ofdm_deconvolution}

\begin{figure}[!ht]
    \centering
    \includegraphics[width=\linewidth]{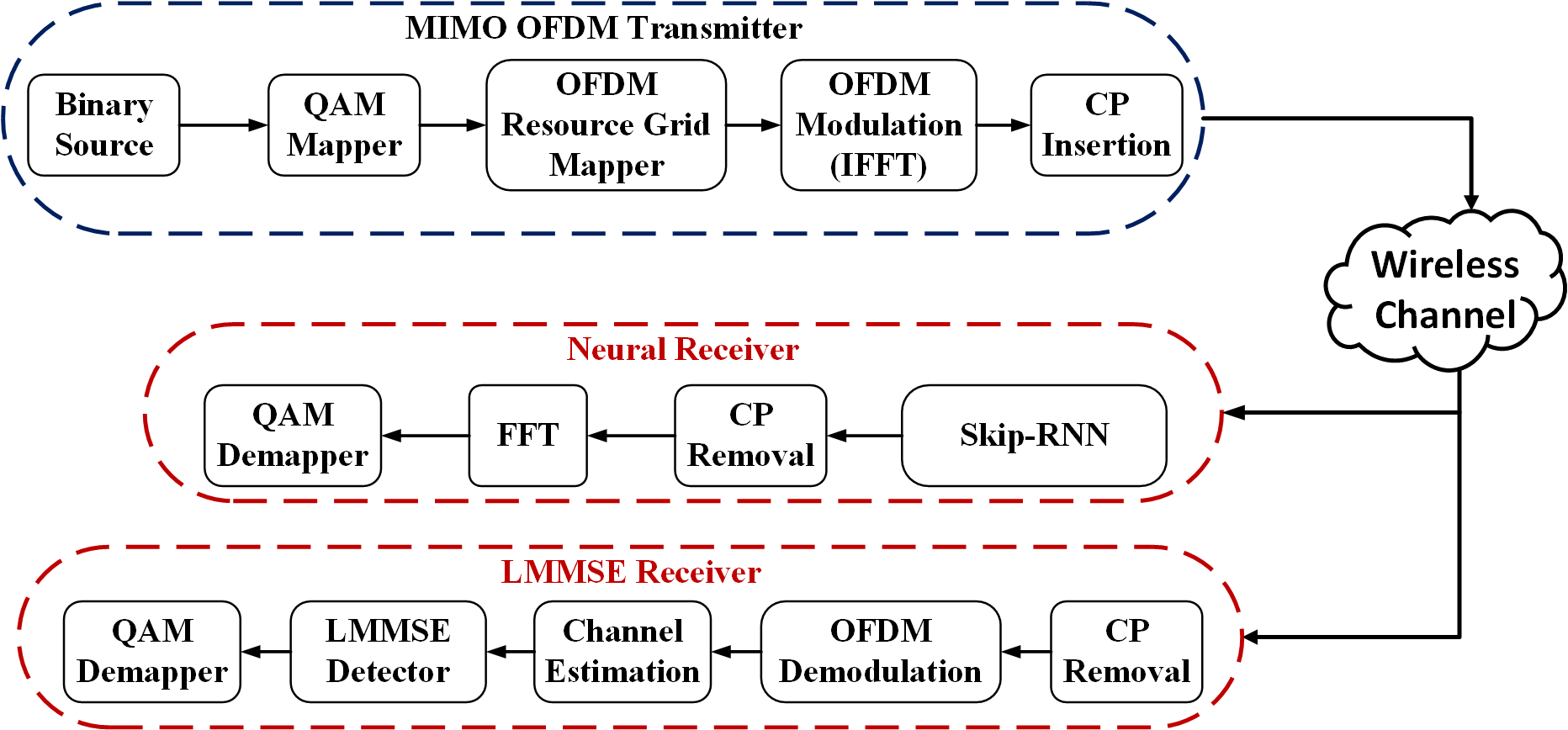}
    \caption{MIMO-OFDM receiver processing with conventional LMMSE and recurrent-neural-network detection branches.}
    \vspace{-0.8cm}
    \label{fig:mimo_detect}
\end{figure}

Consider a frequency-selective MIMO channel with \(N_t\) transmit and \(N_r\) receive antennas and an \(L\)-tap matrix-valued impulse response
\(\{\mathbf{H}[\ell]\}_{\ell=0}^{L-1}\), where \(\mathbf{H}[\ell]\in\mathbb{C}^{N_r\times N_t}\).
Let \(\mathbf{x}_t[\tau]\in\mathbb{C}^{N_t}\) and \(\mathbf{y}_t[\tau]\in\mathbb{C}^{N_r}\) denote the transmitted and received time-domain sample vectors, respectively. The received signal is

\begin{equation}
\label{eq:mimo_linear_convolution}
\mathbf{y}_t[\tau]
=
\sum_{\ell=0}^{L-1}
\mathbf{H}[\ell]\mathbf{x}_t[\tau-\ell]
+
\mathbf{v}_t[\tau]
\equiv
\left(\mathbf{H}*\mathbf{x}_t\right)[\tau]
+
\mathbf{v}_t[\tau],
\end{equation}
where \(\mathbf{v}_t[\tau]\in\mathbb{C}^{N_r}\) is additive noise.

CP-OFDM restructures this deconvolution into a blockwise form. We consider \(N_{\mathrm{fft}}\) subcarriers, of which
\(N_{\mathrm{sc}}\leq N_{\mathrm{fft}}\) are active, with
\(N_{\mathrm{sym}}\) OFDM symbols per TTI. For the \(n\)-th symbol, the frequency-domain transmit vectors
\(\mathbf{x}_f[n,q]\in\mathbb{C}^{N_t}\),
\(q=0,\ldots,N_{\mathrm{fft}}-1\), are mapped to the subcarriers and transformed by an \(N_{\mathrm{fft}}\)-point IDFT applied independently at each transmit antenna, producing the time-domain samples
\(\mathbf{x}_t[n,p]\in\mathbb{C}^{N_t}\).

A CP of length \(N_{\mathrm{cp}}\geq L-1\) is then appended. Assuming the channel remains approximately invariant over an OFDM symbol, the CP prevents inter-symbol interference in the retained samples and, after CP removal, converts the linear convolution in \eqref{eq:mimo_linear_convolution} into an \(N_{\mathrm{fft}}\)-point \emph{circular} convolution within each symbol~\cite{oppenheim1999discrete},
\begin{equation}
\label{eq:mimo_circular_convolution}
\mathbf{y}_t[n,p]
=
\sum_{\ell=0}^{L-1}
\mathbf{H}[\ell]
\mathbf{x}_t
\left[n,(p-\ell)_{N_{\mathrm{fft}}}\right]
+
\mathbf{v}_t[n,p],
\end{equation}
where \((\cdot)_{N_{\mathrm{fft}}}\) denotes modulo-\(N_{\mathrm{fft}}\) indexing.

Defining the CP-removed transmit and receive blocks as
\[
\overline{\mathbf{x}}_t[n]
=
\operatorname{col}
\left\{
\mathbf{x}_t[n,0],\ldots,
\mathbf{x}_t[n,N_{\mathrm{fft}}-1]
\right\},
\]
and
\[
\overline{\mathbf{y}}_t[n]
=
\operatorname{col}
\left\{
\mathbf{y}_t[n,0],\ldots,
\mathbf{y}_t[n,N_{\mathrm{fft}}-1]
\right\},
\]
the received OFDM block can be written compactly as
\[
\overline{\mathbf{y}}_t[n]
=
\widetilde{\mathbf{H}}_c
\overline{\mathbf{x}}_t[n]
+
\overline{\mathbf{v}}_t[n],
\]
where \(\widetilde{\mathbf{H}}_c\in\mathbb{C}^{N_rN_{\mathrm{fft}}\times N_tN_{\mathrm{fft}}}\) is the block-circulant matrix generated by \(\{\mathbf{H}[\ell]\}_{\ell=0}^{L-1}\).
Recovering \(\overline{\mathbf{x}}_t[n]\) from \(\overline{\mathbf{y}}_t[n]\) is thus a blockwise \emph{circular deconvolution}. Accordingly, time-domain receive processing can be written as
\[
\widehat{\overline{\mathbf{x}}}_t[n]
=
\widetilde{\mathbf{G}}_c[n]
\overline{\mathbf{y}}_t[n],
\]
where \(\widetilde{\mathbf{G}}_c[n] \in\mathbb{C}^{N_tN_{\mathrm{fft}}\times N_rN_{\mathrm{fft}}}\) is the block-circulant matrix corresponding to the time-domain sequence obtained from the inverse DFT of \(\{\mathbf{G}_f[n,q]\}_{q=0}^{N_{\mathrm{fft}}-1}\). Circular convolution is diagonalized by the DFT. 

The MIMO channel frequency response on subcarrier \(q\) is
\[
\mathbf{H}_f[q]
=
\sum_{\ell=0}^{L-1}
\mathbf{H}[\ell]
e^{-j2\pi q\ell/N_{\mathrm{fft}}},
\qquad
\mathbf{H}_f[q]\in\mathbb{C}^{N_r\times N_t}.
\]
After applying the FFT independently at each receive antenna,
\begin{equation}
\mathbf{y}_f[n,q]
=
\mathbf{H}_f[q]\mathbf{x}_f[n,q]
+
\mathbf{v}_f[n,q].\nonumber
\end{equation}

A conventional frequency-domain receiver then forms
\[
\widehat{\mathbf{x}}_f[n,q]
=
\mathbf{G}_f[n,q]\mathbf{y}_f[n,q],
\qquad
\mathbf{G}_f[n,q]\in\mathbb{C}^{N_t\times N_r}.
\]
For a MP channel, a stable causal inverse can be constructed under appropriate conditions, whereas NMP channel zeros outside the unit circle lead to unstable or noncausal components under direct inversion. In the noisy case, exact inversion is also undesirable since spectral nulls of \(\mathbf{H}(z)\) are inverted into large gains that amplify \(\mathbf{V}_t(z)\); regularized inverse filters are therefore typically used instead~\cite{sayed2003fundamentals}. As illustrated in Fig.~\ref{fig:mimo_detect}, the received MIMO-OFDM signal can be processed by the neural receiver operating directly in the time domain and a conventional LMMSE receiver based on frequency-domain MIMO-OFDM receive processing. Conventional frequency-domain OFDM receiver and neural receiver represent the same underlying inverse-filtering problem. Neural receiver can approximate the corresponding deconvolution operator through temporal filtering of the received samples, whereas frequency-domain receive processing is performed independently on each subcarrier. 

\subsection{Time-Domain Stable and Causal Equalization of Mixed-Phase Channels}
\label{subsec:iir_fir_nmp}
We first characterize the filtering structure of a general stable causal inverse of mixed-phase channels and then specialize to the LMMSE filter considered in this work. Let $G_{\mathrm{eq}}^{(\Delta)}(z)$ denote a causal channel inverse with decision delay $\Delta$. For a rational channel, any stable causal rational realization of $G_{\mathrm{eq}}^{(\Delta)}(z)$ can, under distinct poles, be expressed as
\begin{equation}
\label{eq:general_parallel_equalizer}
G_{\mathrm{eq}}^{(\Delta)}(z)
=
\underbrace{
\sum_{m=1}^{M} \frac{c_{m,\Delta}}{1-p_m z^{-1}}
}_{Q_{\Delta}(z)}
+
\underbrace{
\sum_{d=0}^{D_{\Delta}-1} f_{d,\Delta} z^{-d}
}_{P_{\Delta}(z^{-1})},
\qquad |p_m| < 1,
\end{equation}
where $Q_{\Delta}(z)$ and $P_{\Delta}(z^{-1})$ are the stable IIR and finite causal FIR components, respectively. Thus, the parallel IIR--FIR interpretation is not specific to a particular receiver design criterion; different receiver filters simply induce different poles, residues, and FIR coefficients. Accordingly, the recurrent and skip branches of the introduced skip-RNN are configured to approximate the dominant IIR and FIR components of the target filter.

In this work, we focus on the LMMSE receiver because it provides a noise-regularized channel inverse suitable for weight configuration while avoiding the severe noise enhancement of exact inversion along weak channel modes. The following development therefore shows how the causal LMMSE receiver acquires the general IIR--FIR form in \eqref{eq:general_parallel_equalizer}.

\begin{assumption}
\label{ass:statistics}
The transmitted symbols and additive noise are mutually independent, zero-mean white processes with powers $\sigma_x^2$ and $\sigma_v^2>0$, respectively.
\end{assumption}

\begin{definition}[Mixed-phase channel]
\label{def:mixed_phase_channel}
Let $H(z)$ be a causal and stable channel with no zeros on the unit circle. It admits the factorization
\begin{equation}
\label{eq:min_allpass_factorization}
H(z)=B(z)A(z),
\end{equation}
where $B(z)$ is causal, stable, and minimum phase, while $A(z)$ is a causal stable all-pass factor satisfying
\begin{equation}
\label{eq:allpass_property}
A(z)A^{\dagger}(z)=1.
\end{equation}
The channel is mixed phase when $A(z)$ is nontrivial, corresponding to one or more zeros of $H(z)$ outside the unit circle.
\end{definition}

The significance of this factorization follows from the exact channel inverse, $H^{-1}(z)=A^{\dagger}(z)B^{-1}(z).$
Since $B(z)$ is minimum phase, $B^{-1}(z)$ admits a causal and stable realization. In contrast, for a nontrivial all-pass factor, $A^{-1}(z)=A^{\dagger}(z)$ admits a stable anti-causal realization. Consequently, exact stable inversion of an NMP channel is generally noncausal, whereas enforcing causality on the exact inverse leads to instability~\cite{moir2007z,moir2013inverting}. A realizable receiver must therefore approximate or regularize this two-sided inverse structure.

For the receive processing considered in this work to configure the weights of skip-RNN, the unconstrained LMMSE filter for estimating the delayed symbol $x[\tau-\Delta]$ is~\cite{sayed2003fundamentals,kailath2000linear}
\begin{equation}
\label{eq:mmse_nmp_inverse}
W_{\mathrm{nc}}^{(\Delta)}(z)
=
\frac{
\sigma_x^2 z^{-\Delta}H^{\dagger}(z)
}{
\sigma_x^2H(z)H^{\dagger}(z)+\sigma_v^2
}.
\end{equation}
Equation~\eqref{eq:mmse_nmp_inverse} can be interpreted as a noise-regularized channel inverse. Unlike exact inversion, it limits amplification along weak channel modes. Nevertheless, the unconstrained solution is generally two-sided for a mixed-phase channel and therefore cannot be directly implemented as a causal receiver.

\begin{definition}[Causal--anti-causal decomposition]
\label{def:causal_projection}
For a stable two-sided Laurent expansion of $F(z)=\sum_{m=-\infty}^{\infty}f[m]z^{-m}$, define
\begin{equation}
[F(z)]_{+}
:=
\sum_{m\geq0}f[m]z^{-m},
\qquad
[F(z)]_{-}
:=
\sum_{m<0}f[m]z^{-m},\nonumber
\end{equation}
such that $F(z)=[F(z)]_{+}+[F(z)]_{-}$.
\end{definition}
The connection between the LMMSE inverse filter and exact channel inversion follows directly in the zero-noise limit.

\begin{lemma}[Noiseless limit]
\label{lem:noiseless_limit}
For the unconstrained LMMSE inverse filter in \eqref{eq:mmse_nmp_inverse},
\begin{equation}
\lim_{\sigma_v^2\rightarrow0}
W_{\mathrm{nc}}^{(\Delta)}(z)
=
z^{-\Delta}H^{-1}(z).\nonumber
\end{equation}
\end{lemma}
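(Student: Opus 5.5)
The plan is to evaluate the limit pointwise on the unit circle $z=e^{j\omega}$, where all quantities are well defined. By Definition~\ref{def:mixed_phase_channel}, $H(z)$ is causal, stable, and has no zeros on the unit circle. For the scalar case this means $H(e^{j\omega})\neq 0$ for every $\omega$. In the MIMO case with $N_r=N_t$, we read it as $H(e^{j\omega})$ being nonsingular. On the unit circle the paraconjugate reduces to the conjugate (transpose): $H^{\dagger}(e^{j\omega})=H(e^{j\omega})^{H}$. Hence
\[
H(e^{j\omega})H^{\dagger}(e^{j\omega})=|H(e^{j\omega})|^2>0 ,
\]
and in the matrix case this product is positive definite.

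Next we write the filter in \eqref{eq:mmse_nmp_inverse} as
\[
W_{\mathrm{nc}}^{(\Delta)}(z)=z^{-\Delta}H^{\dagger}(z)\Big(H(z)H^{\dagger}(z)+\tfrac{\sigma_v^2}{\sigma_x^2}\Big)^{-1}.
\]
Let $\lambda=\sigma_v^2/\sigma_x^2\to 0$. The map $\lambda\mapsto (HH^{\dagger}+\lambda)^{-1}$ is continuous at $\lambda=0$, because $HH^{\dagger}$ is invertible on the circle. Its limit is therefore $(HH^{\dagger})^{-1}=H^{-\dagger}H^{-1}$. Multiplying by $H^{\dagger}$ cancels $H^{\dagger}H^{-\dagger}$ and leaves $z^{-\Delta}H^{-1}(z)$. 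In the scalar case the same computation is just
\[
\frac{H^{\dagger}}{HH^{\dagger}}=\frac{1}{H},
\]
valid wherever $H^{\dagger}\neq 0$, which holds on the unit circle.

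The step that needs care is the sense in which the limit holds, not the algebra. If the limit is meant uniformly on the unit circle, we use compactness. $H$ and $H^{\dagger}$ are continuous on $|z|=1$, so $|H(e^{j\omega})|^2\ge \delta>0$ (in the MIMO case, the minimum eigenvalue of $HH^{\dagger}$ is at least $\delta$). We then bound the difference:
\[
\Big|W_{\mathrm{nc}}^{(\Delta)}-z^{-\Delta}H^{-1}\Big|=\Big|\frac{\lambda}{H(|H|^2+\lambda)}\Big|\le \frac{\lambda}{\delta^{3/2}}\to 0 .
\]
In the MIMO case the analogous resolvent identity gives
\[
H^{\dagger}(HH^{\dagger}+\lambda I)^{-1}-H^{-1}=-\lambda\,H^{-1}(HH^{\dagger}+\lambda I)^{-1},
\]
whose norm is likewise $O(\lambda)$. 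Uniform convergence on the circle then implies $L^2$ convergence of the Laurent coefficients, so the two-sided impulse responses also converge. This is consistent with Definition~\ref{def:causal_projection}: the limiting filter is $z^{-\Delta}H^{-1}(z)=z^{-\Delta}A^{\dagger}(z)B^{-1}(z)$, which is generally two-sided, as discussed just before the lemma.
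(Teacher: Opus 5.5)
Your proof is correct and follows the same route as the paper, which simply sets $\sigma_v^2=0$ in \eqref{eq:mmse_nmp_inverse} and cancels $H^{\dagger}(z)$ to obtain $z^{-\Delta}H^{-1}(z)$. The rest of what you do is rigour the paper leaves implicit: restricting to the unit circle, where Definition~\ref{def:mixed_phase_channel} guarantees $H(e^{j\omega})\neq 0$; the uniform bound $\lambda/\delta^{3/2}$ with $\lambda=\sigma_v^2/\sigma_x^2$; the square-MIMO resolvent identity; and the resulting convergence of the two-sided impulse responses.
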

\begin{proof}
Taking $\sigma_v^2\rightarrow0$ in \eqref{eq:mmse_nmp_inverse} gives
\begin{align}
\lim_{\sigma_v^2\rightarrow0}
W_{\mathrm{nc}}^{(\Delta)}(z)
&=
\frac{
z^{-\Delta}H^{\dagger}(z)
}{
H(z)H^{\dagger}(z)
}
\nonumber\\
&=
z^{-\Delta}H^{-1}(z).\nonumber
\end{align}
\end{proof}

Using the mixed-phase factorization in \eqref{eq:min_allpass_factorization} and the all-pass property in \eqref{eq:allpass_property},
\begin{equation}
H(z)H^{\dagger}(z)
=
B(z)B^{\dagger}(z).\nonumber
\end{equation}
Hence, the received-signal power spectrum admits the spectral factorization
\begin{equation}
\label{eq:received_spectrum_scalar}
\sigma_x^2B(z)B^{\dagger}(z)+\sigma_v^2
=
\Phi(z)\Phi^{\dagger}(z),
\end{equation}
where $\Phi(z)$ is chosen as the canonical causal, stable, and minimum-phase spectral factor~\cite{sayed2003fundamentals,kailath2000linear}.
Substituting $H^{\dagger}(z)=B^{\dagger}(z)A^{\dagger}(z)$ and \eqref{eq:received_spectrum_scalar} into \eqref{eq:mmse_nmp_inverse} yields
\begin{equation}
W_{\mathrm{nc}}^{(\Delta)}(z)
=
\frac{1}{\Phi(z)}
z^{-\Delta}
\frac{
\sigma_x^2B^{\dagger}(z)A^{\dagger}(z)
}{
\Phi^{\dagger}(z)
}.
\nonumber
\end{equation}
The factor $1/\Phi(z)$ is causal and stable because $\Phi(z)$ is minimum phase, whereas the remaining factor contains the stable left-sided structure associated with the mixed-phase channel inverse. 
To obtain a realizable causal receiver filter, the Wiener--Hopf construction projects the two-sided solution onto its causal part. The optimal causal LMMSE receiver filter with decision delay $\Delta$ is therefore~\cite{sayed2003fundamentals,kailath2000linear}
\begin{equation}
\label{eq:causal_mmse_wiener_hopf}
W_{\mathrm{c}}^{(\Delta)}(z)
=
\frac{1}{\Phi(z)}
\left[
z^{-\Delta}
\underbrace{
\frac{
\sigma_x^2B^{\dagger}(z)A^{\dagger}(z)
}{
\Phi^{\dagger}(z)
}
}_{R(z)}
\right]_{+}.
\end{equation}
Equation~\eqref{eq:causal_mmse_wiener_hopf} makes the role of the decision delay explicit. Since $R(z)$ admits a stable left-sided realization, write
\begin{equation}
R(z)
=
\sum_{k=0}^{\infty}
r[-k]z^{k}.
\nonumber
\end{equation}
Introducing the decision delay gives
\begin{equation}
z^{-\Delta}R(z)
=
\sum_{k=0}^{\infty}
r[-k]z^{-(\Delta-k)}.
\nonumber
\end{equation}
The terms with $k\leq\Delta$ become causal, whereas those with $k>\Delta$ remain anti-causal and are removed by the causal projection. Therefore,
\begin{equation}
\label{eq:finite_delay_projection}
\left[
z^{-\Delta}R(z)
\right]_{+}
=
\sum_{k=0}^{\Delta}
r[-k]z^{-(\Delta-k)}.
\end{equation}
A finite decision delay thus retains the first $\Delta+1$ coefficients of the stable left-sided component as a causal FIR polynomial. Increasing $\Delta$ preserves a larger portion of the unconstrained LMMSE solution at the cost of increased detection latency.
Substituting \eqref{eq:finite_delay_projection} into \eqref{eq:causal_mmse_wiener_hopf} yields
\begin{equation}
\label{eq:causal_mmse_iir_fir_cascade}
W_{\mathrm{c}}^{(\Delta)}(z)
=
\frac{1}{\Phi(z)}
\underbrace{
\sum_{k=0}^{\Delta}
r[-k]z^{-(\Delta-k)}
}_{F_{\Delta}(z)}.
\end{equation}
Hence, the causal LMMSE inverse filter is a stable causal rational transfer function obtained as the cascade of the stable rational factor $1/\Phi(z)$ and the finite causal polynomial $F_{\Delta}(z)$. 

\begin{proposition}[Finite-delay IIR--FIR representation of the causal LMMSE inverse filter]
\label{prop:mmse_iir_fir_structure}
Assume that $\Phi(z)$ in \eqref{eq:received_spectrum_scalar} is rational and that the poles of $1/\Phi(z)$ are distinct. Then, for any finite decision delay $\Delta\geq0$ and for $D_{\Delta}$ FIR taps, the causal LMMSE filter in \eqref{eq:causal_mmse_iir_fir_cascade} admits the representation
\begin{equation}
\label{eq:unified_parallel_form}
W_{\mathrm{c}}^{(\Delta)}(z)
=
\underbrace{
\sum_{m=1}^{M}
\frac{c_{m,\Delta}}
{1-p_mz^{-1}}
}_{Q_{\Delta}(z)}
+
\underbrace{
\sum_{d=0}^{D_{\Delta}-1}
f_{d,\Delta}z^{-d}
}_{P_{\Delta}(z^{-1})},
\;
|p_m|<1.
\end{equation}
The first term contains the stable IIR modes of the causal LMMSE filter, while the second term is a finite causal FIR component from the rational realization of the polynomial
\end{proposition}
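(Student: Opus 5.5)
The plan is to start from the cascade form \eqref{eq:causal_mmse_iir_fir_cascade}, $W_{\mathrm{c}}^{(\Delta)}(z)=F_{\Delta}(z)/\Phi(z)$, and perform a partial-fraction expansion in the variable $z^{-1}$. Since $\Phi(z)$ is rational, causal, stable and minimum phase, I will write $1/\Phi(z)=N(z^{-1})/D(z^{-1})$ with coprime polynomials in $z^{-1}$. The denominator is normalized as $D(z^{-1})=\prod_{m=1}^{M}(1-p_m z^{-1})$, so its roots $p_m$ are the poles of $1/\Phi(z)$, i.e., the zeros of $\Phi(z)$. Minimum phase of $\Phi(z)$ gives $|p_m|<1$, and by hypothesis the $p_m$ are distinct. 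The polynomial $F_{\Delta}(z)=\sum_{k=0}^{\Delta}r[-k]z^{-(\Delta-k)}$ from \eqref{eq:finite_delay_projection} is a polynomial in $z^{-1}$ of degree at most $\Delta$. Hence $W_{\mathrm{c}}^{(\Delta)}(z)=N(z^{-1})F_{\Delta}(z)/D(z^{-1})$ is a rational function of $z^{-1}$ whose numerator $\mathcal{N}(z^{-1}):=N(z^{-1})F_{\Delta}(z)$ is a polynomial in $z^{-1}$.

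Next I will apply polynomial long division in $z^{-1}$: $\mathcal{N}(z^{-1})=P_{\Delta}(z^{-1})D(z^{-1})+\mathcal{R}(z^{-1})$ with $\deg\mathcal{R}<M$. This gives $W_{\mathrm{c}}^{(\Delta)}(z)=P_{\Delta}(z^{-1})+\mathcal{R}(z^{-1})/D(z^{-1})$. The quotient $P_{\Delta}$ is a polynomial in $z^{-1}$ of degree $\deg\mathcal{N}-M$ when this is nonnegative, and zero otherwise. I will set $D_{\Delta}=\max\{\deg N+\Delta-M+1,0\}$, or any larger value padded with zero coefficients, which shows that $P_{\Delta}$ is a finite causal FIR polynomial $\sum_{d=0}^{D_{\Delta}-1}f_{d,\Delta}z^{-d}$. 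The strictly proper remainder has distinct simple poles, so the standard partial-fraction (Heaviside) theorem gives $\mathcal{R}(z^{-1})/D(z^{-1})=\sum_{m}c_{m,\Delta}/(1-p_m z^{-1})$. The residues are computed as $c_{m,\Delta}=\big[(1-p_m z^{-1})W_{\mathrm{c}}^{(\Delta)}(z)\big]_{z=p_m}$, which makes their dependence on $\Delta$ explicit through $F_{\Delta}(p_m)$.

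Finally I will check causality and stability. Each term $c_{m,\Delta}/(1-p_m z^{-1})$ is taken with its right-sided expansion $\sum_{n\ge0}c_{m,\Delta}p_m^n z^{-n}$, and this expansion converges on $|z|>\max|p_m|$, a region that contains the unit circle. The FIR part is trivially causal and stable. I will also note that pole–zero cancellations between $F_{\Delta}$ and $D$ are harmless: they only set some $c_{m,\Delta}=0$, or one can remove those $p_m$ and reduce $M$.

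The main obstacle is bookkeeping rather than depth. The expansion must be done in $z^{-1}$, not $z$, so that the polynomial quotient is causal. The representation must also be identified as the causal (right-sided) one, consistent with the region of convergence of $W_{\mathrm{c}}^{(\Delta)}$ inherited from $1/\Phi$. For the MIMO case I would apply the same argument entrywise, using a common denominator given by the distinct poles of $\Phi^{-1}(z)$; the coefficients $c_{m,\Delta}$ and $f_{d,\Delta}$ then become matrices.
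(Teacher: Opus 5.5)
Your proposal is correct and follows essentially the same route as the paper's proof. Both write $W_{\mathrm{c}}^{(\Delta)}(z)=F_{\Delta}(z)/\Phi(z)$, use polynomial division to split off a finite causal FIR part $P_{\Delta}(z^{-1})$, and expand the strictly proper remainder in partial fractions over the distinct poles of $1/\Phi(z)$ with $|p_m|<1$, noting that cancellations with zeros of $F_{\Delta}$ only remove modes. Your version adds explicit bookkeeping the paper omits: you carry out the division in $z^{-1}$ so the quotient is causal, give the bound $D_{\Delta}=\max\{\deg N+\Delta-M+1,0\}$, and state the residue formula $c_{m,\Delta}=[(1-p_mz^{-1})W_{\mathrm{c}}^{(\Delta)}(z)]_{z=p_m}$.
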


\begin{proof}
See Appendix~\ref{append_prop_mmse_iir_fir_structure}.
\end{proof}

\begin{remark}[Extension to MIMO LMMSE filtering]
\label{rem:mimo_extension}
The preceding derivation is presented for a scalar channel to expose the mixed-phase structure. For a MIMO channel $\mathbf H(z)\in\mathbb C^{N_r\times N_t}$,
the unconstrained LMMSE filter is obtained from the channel and received-signal spectral-density matrices, while its causal realization follows from matrix Wiener--Hopf spectral factorization~\cite{kailath2000linear,sayed2003fundamentals}. Non-minimum-phase transmission zeros induce a corresponding two-sided inverse structure. With a finite decision delay, causal projection retains a finite portion of this structure, yielding a stable causal rational matrix filter.

For rational matrix transfer functions with simple poles, the resulting causal LMMSE inverse admits the matrix-valued parallel representation
\begin{equation}
\mathbf W_{\mathrm c}^{(\Delta)}(z)
=
\sum_{m=1}^{M}
\frac{\mathbf C_{m,\Delta}}
{1-p_mz^{-1}}
+
\sum_{d=0}^{D_{\Delta}-1}
\mathbf F_{d,\Delta}z^{-d},\nonumber
\end{equation}
where $\mathbf C_{m,\Delta}\in\mathbb C^{N_t\times N_r}$ and $\mathbf F_{d,\Delta}\in\mathbb C^{N_t\times N_r}$ are the matrix-valued residues and FIR coefficients, respectively. This matrix IIR--FIR structure provides the target representation used by the skip-RNN receiver.
\end{remark}


\section{Online Real-Time Neural Receiver for Mixed-Phase Channels}
\label{sec:online_skip_rnn}

This section develops the introduced online \emph{skip-RNN} receiver and shows, step by step, how its recurrent and skip branches realize the mixed IIR--FIR filtering structure established in Proposition~\ref{prop:mmse_iir_fir_structure} for stable causal processing over mixed-phase channels.

\begin{figure}[!ht]
    \centering
    \includegraphics[width=\linewidth]{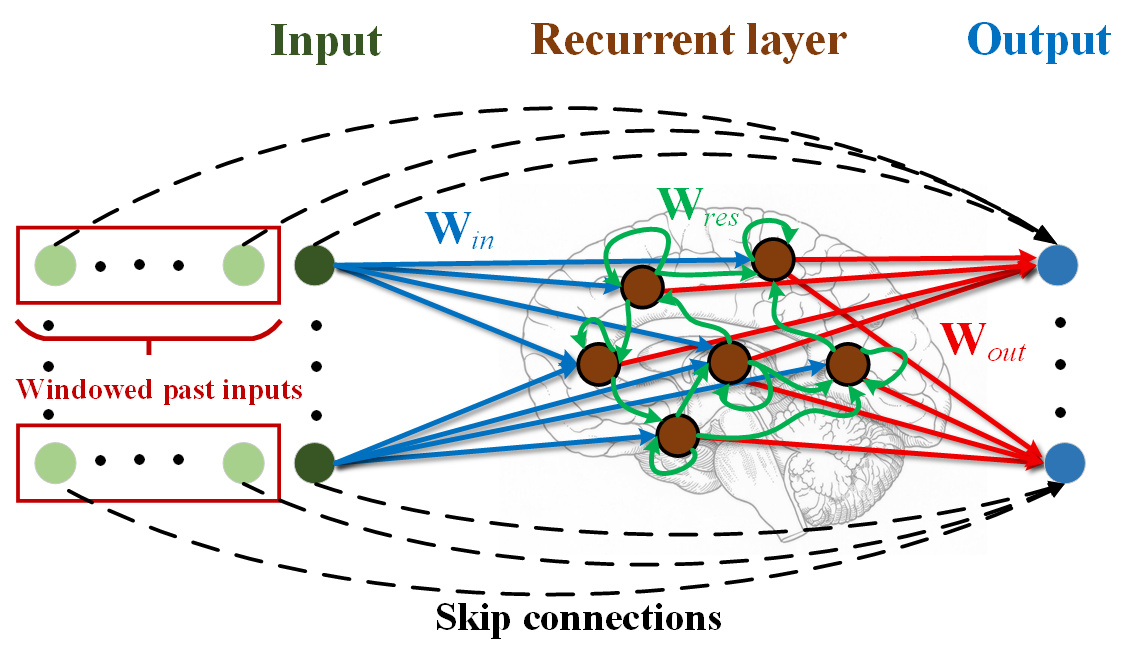}
    \caption{Skip-RNN Neural Receiver Architecture}
    \label{fig:skip-RNN}
\end{figure}

\subsection{Skip-Augmented Recurrent Receiver}
\label{subsec:skip_rnn_architecture}

This section describes the architecture of the skip-RNN neural receiver shown in Fig.~\ref{fig:skip-RNN}. Let $\tau\in\{0,1,\ldots,N-1\}$ index the time-domain samples within TTI $t$. The received sample is $\mathbf y_t[\tau]\in\mathbb C^{N_r}$ and the transmitted sample is $\mathbf x_t[\tau]\in\mathbb C^{N_t}$.

\paragraph{Recurrent branch.}
Skip-RNN consists of a linear recurrent layer with $N_n$ units maintains the state $\mathbf s_{\mathrm{res}}[\tau]\in\mathbb C^{N_n}$ according to
\begin{equation}
\label{eq:rnn_state_update}
\mathbf s_{\mathrm{res}}[\tau]=
\mathbf W_{\mathrm{res}}\mathbf s_{\mathrm{res}}[\tau-1]+
\mathbf W_{\mathrm{in}}\mathbf y_t[\tau],
\end{equation}
where $\mathbf W_{\mathrm{in}}\in\mathbb C^{N_n\times N_r}$, $\mathbf W_{\mathrm{res}}\in\mathbb C^{N_n\times N_n}$.

\paragraph{Windowed input.}
In parallel with the recurrent state, the current received sample and its $N_w-1$ predecessors are stacked into the \emph{skip window}
\begin{equation}
\label{eq:skip_window}
\mathbf y_{t,w}[\tau]:=\big[
\mathbf y_t^{\mathsf T}[\tau],\;
\mathbf y_t^{\mathsf T}[\tau-1],\;
\ldots,\;
\mathbf y_t^{\mathsf T}[\tau-N_w+1]
\big]^{\mathsf T}
\in\mathbb C^{N_rN_w}.
\end{equation}

\paragraph{Readout.}
The recurrent and skip features are concatenated as
\begin{equation}
\mathbf z_t[\tau]
=
\begin{bmatrix}
\mathbf s_{\mathrm{res}}[\tau]\\[2pt]
\mathbf y_{t,w}[\tau]
\end{bmatrix}
\in\mathbb C^{d},
\qquad
d:=N_n+N_rN_w,\nonumber
\end{equation}
and the detected time-domain sample is obtained through the linear readout
\begin{equation}
\widehat{\mathbf x}_t[\tau]=
\mathbf W_{\mathrm{out}}\mathbf z_t[\tau],
\qquad
\mathbf W_{\mathrm{out}}\in\mathbb C^{N_t\times d}.\nonumber
\end{equation}
The matrices $\mathbf W_{\mathrm{in}}$ and $\mathbf W_{\mathrm{res}}$ are fixed before online operation, whereas $\mathbf W_{\mathrm{out}}$ is adapted online from the available pilot observations. Setting $N_w=0$ removes the skip branch and recovers the conventional recurrent receiver.

\subsection{Time-Domain Deconvolution Realized by the Skip-RNN}
\label{subsec:circular_realization}

In this section we establish the direct time-domain input--output relation of the receiver. For this purpose, we consider the following assumption.


\begin{assumption}[Reservoir]
\label{as:reservoir}
The recurrent matrix $\mathbf W_{\mathrm{res}}$ is diagonalizable and satisfies
$\rho_{\mathrm{res}}:=\rho(\mathbf W_{\mathrm{res}})<1$. Its eigenvalues
$\lambda_1,\ldots,\lambda_{N_n}$ are assumed distinct and nonzero.
\end{assumption}

Repeated substitution in~\eqref{eq:rnn_state_update} gives
\begin{align}
\mathbf s_{\mathrm{res}}[\tau]
&=
\mathbf W_{\mathrm{res}}^{K+1}
\mathbf s_{\mathrm{res}}[\tau-K-1]
+
\sum_{k=0}^{K}
\mathbf W_{\mathrm{res}}^k
\mathbf W_{\mathrm{in}}
\mathbf y_t[\tau-k].\nonumber
\end{align}

Since $\rho(\mathbf W_{\mathrm{res}})<1$ under Assumption~\ref{as:reservoir},
\begin{equation}
\mathbf W_{\mathrm{res}}^{K+1}\rightarrow\mathbf 0,
\qquad K\rightarrow\infty.\nonumber
\end{equation}
Therefore, after the initial-state transient has decayed, the recurrent state can be written as
\begin{equation}
\label{eq:reservoir_state_convolution}
\mathbf s_{\mathrm{res}}[\tau]=
\sum_{k=0}^{\infty}
\mathbf W_{\mathrm{res}}^k
\mathbf W_{\mathrm{in}}
\mathbf y_t[\tau-k].
\end{equation}

Partition the readout matrix according to the recurrent and skip features as
\begin{equation}
\mathbf W_{\mathrm{out}}=
\big[\mathbf W_{\mathrm{out,s}}
\;\;
\mathbf W_{\mathrm{out,w}}\big],\nonumber
\end{equation}
where
\begin{equation}
\mathbf W_{\mathrm{out,w}}=
\big[\mathbf W_{\mathrm{F},0}
\;\cdots\;
\mathbf W_{\mathrm{F},N_w-1}\big]. \nonumber
\end{equation}
Substituting \eqref{eq:reservoir_state_convolution} and \eqref{eq:skip_window} into the readout gives
\begin{align}
\widehat{\mathbf x}_t[\tau]
&=
\sum_{k=0}^{\infty}
\mathbf W_{\mathrm{out,s}}
\mathbf W_{\mathrm{res}}^k
\mathbf W_{\mathrm{in}}
\mathbf y_t[\tau-k]
\nonumber\\
&\quad+
\sum_{\ell=0}^{N_w-1}
\mathbf W_{\mathrm{F},\ell}
\mathbf y_t[\tau-\ell].
\label{eq:skip_rnn_time_branches}
\end{align}

Hence, both branches operate directly on delayed received samples. Define the equivalent impulse-response matrix
\begin{equation}
\mathbf G_{\mathrm{SR}}[k]=
\mathbf W_{\mathrm{out,s}}
\mathbf W_{\mathrm{res}}^k
\mathbf W_{\mathrm{in}}+
\begin{cases}
\mathbf W_{\mathrm{F},k}, & 0\leq k<N_w,
\\
\mathbf 0, & k\geq N_w.\nonumber
\end{cases}
\end{equation}
Then \eqref{eq:skip_rnn_time_branches} becomes
\begin{equation}
\label{eq:skip_rnn_direct_convolution}
\widehat{\mathbf x}_t[\tau]
=
\sum_{k=0}^{\infty}
\mathbf G_{\mathrm{SR}}[k]
\mathbf y_t[\tau-k]
=
\left(
\mathbf G_{\mathrm{SR}}
*
\mathbf y_t
\right)[\tau].
\end{equation}

Equation~\eqref{eq:skip_rnn_direct_convolution} establishes that, the skip-RNN is equivalent to a causal MIMO deconvolution filter. The recurrent branch contributes an infinite-duration impulse response, whereas the skip branch contributes a finite-duration response over the first $N_w$ delays. We next move to the $z$-domain to expose the corresponding IIR and FIR basis functions explicitly.

\subsection{IIR--FIR Filter Representation in the $z$-Domain}
\label{subsec:basis_functions}

Having established the time-domain deconvolution implemented by the skip-RNN, we now express the same input--output relation in the $z$-domain. This representation makes the pole structure of the recurrent branch and the finite-delay structure of the skip branch explicit.

\subsubsection{Recurrent IIR basis}

Taking the one-sided $z$-transform of the linear recurrent update with zero initial state gives
\begin{equation}
\mathbf S_{\mathrm{res}}(z)
=
\mathbf W_{\mathrm{res}}z^{-1}
\mathbf S_{\mathrm{res}}(z)
+
\mathbf W_{\mathrm{in}}
\mathbf Y_t(z).\nonumber
\end{equation}
Since $\mathbf I-\mathbf W_{\mathrm{res}}z^{-1}$ is invertible for
$|z|>\rho_{\mathrm{res}}$,
\begin{equation}
\label{eq:state_ztransform}
\mathbf S_{\mathrm{res}}(z)=
\left(\mathbf I-\mathbf W_{\mathrm{res}}z^{-1}
\right)^{-1}
\mathbf W_{\mathrm{in}}\mathbf Y_t(z).
\end{equation}

Equation~\eqref{eq:state_ztransform} is the $z$-domain counterpart of the infinite convolution in \eqref{eq:reservoir_state_convolution}. To expose the individual pole basis functions, use the eigendecomposition
\begin{equation}
\mathbf W_{\mathrm{res}}=
\mathbf V\boldsymbol\Lambda\mathbf V^{-1},
\qquad
\boldsymbol\Lambda=
\operatorname{diag}
(\lambda_1,\ldots,\lambda_{N_n}).\nonumber
\end{equation}
Let $\mathbf v_j$ denote the $j$th column of $\mathbf V$ and
$\mathbf u_j^{\mathsf H}$ the $j$th row of $\mathbf V^{-1}$. Then
\begin{equation}
\left(
\mathbf I-\mathbf W_{\mathrm{res}}z^{-1}
\right)^{-1}=
\sum_{j=1}^{N_n}
\frac{
\mathbf v_j\mathbf u_j^{\mathsf H}
}{1-\lambda_jz^{-1}}. \nonumber
\end{equation}

Define the modal state and modal input matrix as
\begin{equation}
\widetilde{\mathbf S}_{\mathrm{res}}(z):=
\mathbf V^{-1}\mathbf S_{\mathrm{res}}(z),
\qquad
\widetilde{\mathbf W}_{\mathrm{in}}:=
\mathbf V^{-1}\mathbf W_{\mathrm{in}}.\nonumber
\end{equation}
Then~\eqref{eq:state_ztransform} becomes
\begin{equation}
\widetilde{\mathbf S}_{\mathrm{res}}(z)=
\underbrace{\left(
\mathbf I-\boldsymbol\Lambda z^{-1}
\right)^{-1}
\widetilde{\mathbf W}_{\mathrm{in}}
}_{\boldsymbol\Omega_{\mathrm{IIR}}(z)}
\mathbf Y_t(z),\nonumber
\end{equation}
where
\begin{equation}
\boldsymbol\Omega_{\mathrm{IIR}}(z)=
\begin{bmatrix}
\dfrac{
\widetilde{\mathbf w}_{\mathrm{in}}^{(1)\mathsf T}
}{
1-\lambda_1z^{-1}
}\\
\vdots\\
\dfrac{\widetilde{\mathbf w}_{\mathrm{in}}^{(N_n)\mathsf T}
}{1-\lambda_{N_n}z^{-1}} \nonumber
\end{bmatrix},
\end{equation}
and $\widetilde{\mathbf w}_{\mathrm{in}}^{(j)\mathsf T}$ denotes the $j$th row of $\widetilde{\mathbf W}_{\mathrm{in}}$.

Thus, in modal coordinates, the recurrent branch consists of $N_n$ independent first-order IIR modes with poles
$\{\lambda_j\}_{j=1}^{N_n}$.

\begin{figure}[!t]
\centering
\includegraphics[width=\linewidth]{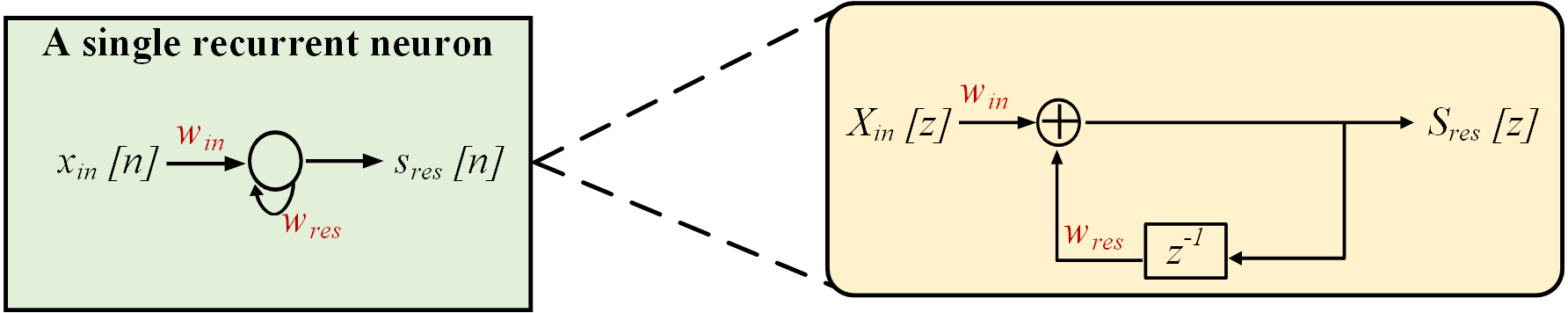}
\caption{Signal-flow representation of a linear recurrent mode interpreted as a first-order IIR filter.}
\label{fig:neuron_iir}
\end{figure}

The reservoir contribution at the receiver output can be expressed in the same coordinates by defining $\widetilde{\mathbf W}_{\mathrm{out,s}}:=\mathbf W_{\mathrm{out,s}}\mathbf V$.
Therefore, $\mathbf W_{\mathrm{out,s}} \mathbf S_{\mathrm{res}}(z) = \widetilde{\mathbf W}_{\mathrm{out,s}} \widetilde{\mathbf S}_{\mathrm{res}}(z)$.
The modal transformation therefore leaves the input--output transfer function unchanged and only reveals the recurrent dynamics as a bank of first-order IIR basis functions.

\subsubsection{Skip FIR basis}

Applying
$\mathcal Z\{\mathbf y_t[\tau-\ell]\}
=
z^{-\ell}\mathbf Y_t(z)$
to the windowed input in \eqref{eq:skip_window} gives
\begin{equation}
\mathbf Y_{t,w}(z)=
\boldsymbol\Omega_{\mathrm{FIR}}(z)
\mathbf Y_t(z),\nonumber
\end{equation}
where
\begin{equation}
\boldsymbol\Omega_{\mathrm{FIR}}(z)=
\begin{bmatrix}
\mathbf I_{N_r}\\
z^{-1}\mathbf I_{N_r}\\\vdots\\
z^{-(N_w-1)}\mathbf I_{N_r}
\end{bmatrix}\in
\mathbb C^{N_rN_w\times N_r}.\nonumber
\end{equation}
Thus, the skip branch provides the complete causal FIR delay basis over the first $N_w$ samples.

\subsubsection{Combined IIR--FIR representation}

Stacking the recurrent and skip features gives
\begin{equation}
\boldsymbol\Omega_{\mathrm{SR}}(z)=
\begin{bmatrix}
\boldsymbol\Omega_{\mathrm{IIR}}(z)\\
\boldsymbol\Omega_{\mathrm{FIR}}(z)
\end{bmatrix},\nonumber
\end{equation}
such that
\begin{equation}
\mathbf Z_t(z)=
\boldsymbol\Omega_{\mathrm{SR}}(z)
\mathbf Y_t(z).\nonumber
\end{equation}

Using the transformed recurrent readout, the exact input--output relation of the skip-RNN is
\begin{equation}
\widehat{\mathbf X}_t(z)
=
\mathbf G_{\mathrm{SR}}(z)
\mathbf Y_t(z),
\end{equation}
where
\begin{equation}
\label{eq:skip_rnn_mixed_iir_fir_transfer}
\mathbf G_{\mathrm{SR}}(z)
=
\underbrace{
\widetilde{\mathbf W}_{\mathrm{out,s}}
\boldsymbol\Omega_{\mathrm{IIR}}(z)
}_{\text{recurrent IIR branch }(\widehat{\mathbf Q}_{N_n}(z))}
+
\underbrace{
\sum_{\ell=0}^{N_w-1}
\mathbf W_{\mathrm{F},\ell}z^{-\ell}
}_{\text{skip FIR branch }(\widehat{\mathbf F}_{N_w}(z))}.
\end{equation}


The poles $\{\lambda_j\}$ are determined by the fixed recurrent matrix $\mathbf W_{\mathrm{res}}$, while the effective residues and FIR coefficients are determined through the readout weights together with the fixed input mapping. Consequently, the recurrent branch spans a structured IIR function space, whereas the skip branch spans a finite-dimensional FIR delay space.

Comparing \eqref{eq:skip_rnn_mixed_iir_fir_transfer} with \eqref{eq:unified_parallel_form} shows that the skip-RNN realizes the parallel IIR--FIR structure of the causal and stable inversion for mixed-phase channels established in Proposition~\ref{prop:mmse_iir_fir_structure}. This correspondence motivates configuring the fixed recurrent and skip parameters so that the resulting basis functions are well matched to the dominant structure of the target function.

\subsection{Geometric Interpretation and Weight Configuration Problem Formulation}
\label{subsec:detection_mse_lmmse_projection}

As shown in Section~\ref{subsec:basis_functions}, the recurrent branch provides a set of stable IIR basis functions determined by $\mathbf W_{\mathrm{res}}$ and $\mathbf W_{\mathrm{in}}$, while the skip branch provides a finite-delay FIR basis determined by the window length $N_w$. Define
\begin{equation}
\boldsymbol\Theta_{\mathrm{IIR}}
=
\left\{
\mathbf W_{\mathrm{res}},
\mathbf W_{\mathrm{in}}
\right\},
\qquad
\boldsymbol\Theta=
\left\{
\boldsymbol\Theta_{\mathrm{IIR}},
N_w
\right\}.
\end{equation}
For a fixed configuration $\boldsymbol\Theta$, the skip-RNN spans the subspace
\begin{equation}
\Omega_{\mathrm{SR}}(\boldsymbol\Theta)
=
\Omega_{\mathrm{IIR}}(\boldsymbol\Theta_{\mathrm{IIR}})
+
\Omega_{\mathrm{FIR}}(N_w).\nonumber
\end{equation}
Thus, $\mathbf W_{\mathrm{res}}$ and $\mathbf W_{\mathrm{in}}$ determine the available recurrent IIR basis, whereas $N_w$ determines the dimension of the FIR basis.

For a given received block, the output weights are trained from the available reference symbols according to
\begin{equation}
\label{eq:finite_pilot_readout_training}
\widehat{\mathbf W}_{\mathrm{out}}=
\operatorname*{arg\,min}_{\mathbf W_{\mathrm{out}}}
\frac{1}{T}\left\|
\mathbf W_{\mathrm{out}}
\mathbf Z_t(\boldsymbol\Theta)-
\mathbf X_t\right\|_F^2.
\end{equation}
For fixed $\boldsymbol\Theta$, the feature matrix $\mathbf Z_t(\boldsymbol\Theta)$ is fixed, so \eqref{eq:finite_pilot_readout_training} is a standard LS problem. Training $\mathbf W_{\mathrm{out}}$ changes only the linear combination of the available IIR and FIR basis functions; it does not modify the basis functions themselves.

\begin{figure}[!ht]
\centering
\includegraphics[width=\linewidth]{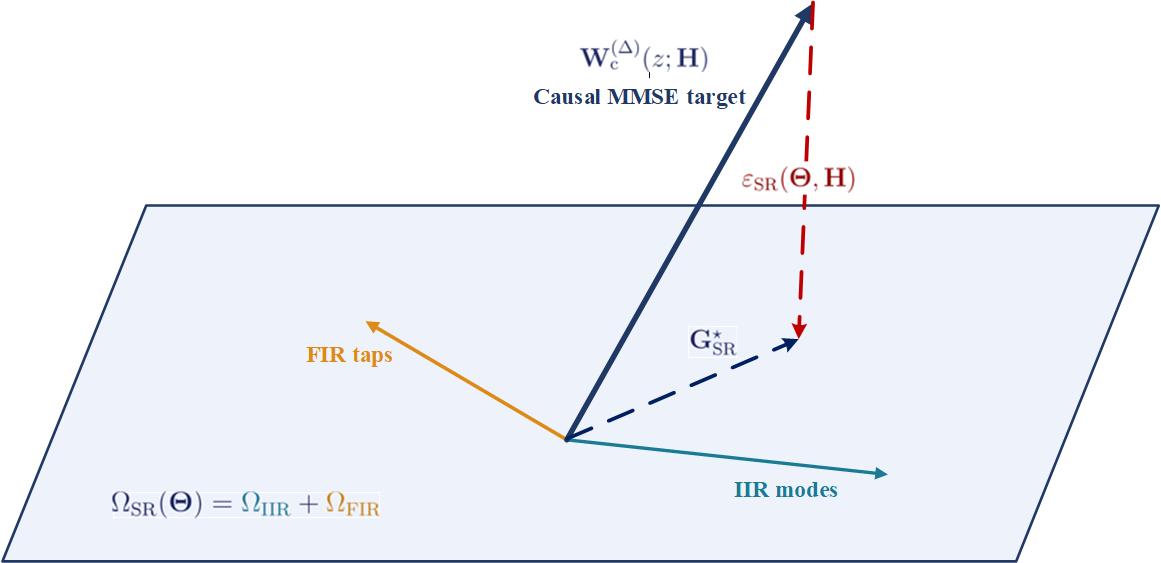}
\caption{Geometric interpretation of skip-RNN receiver. The recurrent and skip features span the joint IIR-FIR space $\Omega_{\mathrm{SR}}$, onto which the LMMSE inverse filter is projected through the readout.}
\label{fig:skip_rnn_geometric}
\end{figure}

\paragraph{Geometric Interpretation.}
This is precisely the situation described by the geometric interpretation of reservoir-based receiver introduced in our previous works~\cite{10974467,jere2023channel}: the fixed units generate a set of basis functions, and training the output weights selects the linear combination of them that best approximates a desired target function $f(\cdot)$. Under an $\ell_2$ loss that combination is the orthogonal projection of $f(\cdot)$ onto
the spanned subspace,
\begin{equation}
f_\perp(\cdot)=\operatorname{proj}_{\Omega_{\mathrm{SR}}(\boldsymbol\Theta)}\{f(\cdot)\},\nonumber
\end{equation}

taken here under the metric induced by the received signal over the pilot block. Fig.~\ref{fig:skip_rnn_geometric} illustrates this for the skip-RNN.

\paragraph{Representation error induced by the fixed basis.}
For a fixed configuration $\boldsymbol\Theta$, every choice of $\mathbf W_{\mathrm{out}}$ produces a receiver response belonging to $\Omega_{\mathrm{SR}}(\boldsymbol\Theta)$. Therefore, even with an optimally chosen readout, any component of the target filter lying outside this space cannot be represented by the skip-RNN. The resulting representation error is
\begin{equation}
\varepsilon_{\mathrm{SR}}^2(\boldsymbol\Theta)
= \left\| f(\cdot) - \operatorname{proj}_{\Omega_{\mathrm{SR}}(\boldsymbol\Theta)}
\{f(\cdot)\} \right\|_2^2 . \nonumber
\end{equation}

This quantity is the \emph{representation error}: it measures how closely the function space generated by the skip-RNN's fixed basis can represent the target function.

\paragraph{Why random basis functions are not enough}
basis mismatch and finite-sample readout estimation constitute distinct sources of error. This distinction explains why random reservoir initialization can be inefficient. Random $\mathbf W_{\mathrm{res}}$ places the poles, and hence $\Omega_{\mathrm{IIR}} (\boldsymbol\Theta)$, without any prior knowledge to the channel inverses the receiver will actually approximate. The resulting $\varepsilon_{\mathrm{SR}}$ is therefore uncontrolled: it varies from draw to draw and from channel to channel, and nothing in \eqref{eq:finite_pilot_readout_training} can compensate for a poor draw, since training cannot leave the subspace. Configuring $\boldsymbol\Theta$ from the statistical knowledge of the target is what removes this dependence on chance.

\paragraph{Causal LMMSE filter as the target function}
For a channel realization $\mathbf H$, the target transfer function is the stable causal LMMSE filter with decision delay $\Delta$, denoted by $\mathbf W_{\mathrm c}^{(\Delta)}(z;\mathbf H)$ and derived in Sec.~\ref{subsec:iir_fir_nmp}. The skip-RNN can realize only transfer functions belonging to the space $\Omega_{\mathrm{SR}}(\boldsymbol\Theta)$ determined by its fixed configuration. Hence, the best realizable receiver response is the projection of the causal LMMSE target onto this space,
\begin{equation}
\mathbf G_{\mathrm{SR}}^{\star}
(z;\boldsymbol\Theta,\mathbf H)=
\operatorname{proj}_{\Omega_{\mathrm{SR}}(\boldsymbol\Theta)}
\left\{\mathbf W_{\mathrm c}^{(\Delta)}(z;\mathbf H)\right\}. \nonumber
\end{equation}

The corresponding representation error is
\begin{equation}
\varepsilon_{\mathrm{SR}}^2
(\boldsymbol\Theta,\mathbf H)=
\left\|
\mathbf W_{\mathrm c}^{(\Delta)}(z;\mathbf H)-
\mathbf G_{\mathrm{SR}}^{\star}
(z;\boldsymbol\Theta,\mathbf H)\right\|_2^{2}. \nonumber
\end{equation}

Thus, for a fixed configuration $\boldsymbol\Theta$, readout training only selects a linear combination of the available IIR and FIR basis functions and cannot change the space $\Omega_{\mathrm{SR}}(\boldsymbol\Theta)$ itself. The remaining approximation error is therefore determined by how well the fixed skip-RNN basis represents $\mathbf W_{\mathrm c}^{(\Delta)}(z;\mathbf H)$. Consequently, the two stages play distinct roles. Online readout training adapts the receiver response within $\Omega_{\mathrm{SR}}(\boldsymbol\Theta)$, whereas the configuration of $\mathbf W_{\mathrm{res}}$, $\mathbf W_{\mathrm{in}}$, and $N_w$ determines the function space available for this adaptation. Weight configuration therefore aims to construct a basis whose span provides an accurate approximation of the target causal LMMSE filter.

\paragraph{Weight-configuration problem.}
The configuration $\boldsymbol\Theta$ is determined offline and must support different channel realizations. The appropriate design objective is therefore the expected representation error
\begin{equation}
\label{eq:problem_p3_projection}
\boldsymbol\Theta^{\star}=
\operatorname*{arg\,min}_{\boldsymbol\Theta\in\mathcal C}
\;
\mathbb E_{\mathbf H}\left[\varepsilon_{\mathrm{SR}}^2(\boldsymbol\Theta,\mathbf H)\right],
\end{equation}
where $\mathcal C$ denotes the admissible configuration set, including the reservoir stability constraint
$\rho(\mathbf W_{\mathrm{res}})<1$ and the prescribed receiver-complexity constraints on the recurrent dimension and skip-window length.

Problem~\eqref{eq:problem_p3_projection} shows that weight configuration is fundamentally a function-space design problem: the fixed IIR and FIR basis functions should be selected so that their joint span is well aligned with the ensemble of causal LMMSE inverse filters. In practice, the true distribution $p(\mathbf H)$ is unavailable. Instead, the corresponding causal LMMSE inverse of estimated channel realizations are used to infer the dominant structure of the target function space.

\section{Channel-Informed Weight Configuration of skip-RNN}
\label{sec:weight_configuration}

This section exploits the IIR--FIR structure established in Section~\ref{subsec:iir_fir_nmp} to construct this configuration. The dominant low-dimensional structure of the IIR ensemble is extracted and approximated using stable rational functions to configure $\mathbf W_{\mathrm{res}}$ and $\mathbf W_{\mathrm{in}}$, while the FIR structure determines an appropriate skip-window length. In addition, a statistical FIR readout prior is obtained from the channel ensemble and incorporated through prior-centered regularized online adaptation. Importantly, this prior does not alter $\Omega_{\mathrm{SR}}(\boldsymbol\Theta)$; rather, it improves estimation of the readout coefficients when only a limited number of pilot samples are available.

\subsection{Low-Dimensional Approximation of the IIR Component}
\label{subsec:low_dim_iir}

Proposition~\ref{prop:mmse_iir_fir_structure} shows that, for each channel realization $r$, the causal LMMSE filter can be written as
\begin{equation}
\label{eq:causal_mmse_realization_decomp}
W_{\mathrm c}^{(\Delta,r)}(z)
=
Q_{\Delta}^{(r)}(z)
+
P_{\Delta}^{(r)}(z^{-1}),
\end{equation}
where $Q_{\Delta}^{(r)}(z)$ contains the stable IIR modes and
$P_{\Delta}^{(r)}(z^{-1})$ is a finite FIR component. This decomposition matches the two branches of the skip-RNN. The skip branch already provides the delay basis
$\{1,z^{-1},\ldots,z^{-(N_w-1)}\}$, whereas the recurrent branch contains only $N_n$ fixed first-order IIR basis functions. Therefore, the main configuration problem is to construct a finite recurrent basis that captures the dominant structure of the realization-dependent IIR component $Q_{\Delta}^{(r)}(z)$.

To characterize the ensemble of IIR targets, the digital frequency interval is sampled at
\begin{equation}
\omega_k=
\frac{2\pi k}{N},
\qquad
k=0,1,\ldots,N-1.\nonumber
\end{equation}
For the $r$th channel realization, define
\begin{equation}
\mathbf q^{(r)}=
\begin{bmatrix}
Q_{\Delta}^{(r)}(e^{j\omega_0}) &
Q_{\Delta}^{(r)}(e^{j\omega_1}) &
\cdots &
Q_{\Delta}^{(r)}(e^{j\omega_{N-1}})
\end{bmatrix}^{T}.\nonumber
\end{equation}
For the MIMO case, the entries of the matrix-valued IIR filter at each frequency are vectorized and stacked.

Let denote an $M$-dimensional orthonormal basis,
\begin{equation}
\mathbf F_M =
\begin{bmatrix}
\mathbf f_1 &
\mathbf f_2 &
\cdots &
\mathbf f_M
\end{bmatrix},
\qquad
\mathbf F_M^H\mathbf F_M = \mathbf I_M, \nonumber
\end{equation}

The projection of $\mathbf q^{(r)}$ onto this subspace is
\begin{equation}
\widetilde{\mathbf q}_{M}^{(r)}=
\mathbf F_M\mathbf F_M^H
\mathbf q^{(r)}.\nonumber
\end{equation}

Define the second-moment matrix of the IIR target $\mathbf R_q$ estimated from $N_{\mathrm{obs}}$ channel realizations and its eigen-decomposition as
\begin{equation}
\widehat{\mathbf R}_q
=
\frac{1}{N_{\mathrm{obs}}}
\sum_{r=1}^{N_{\mathrm{obs}}}
\mathbf q^{(r)}
\mathbf q^{(r)H} = \mathbf Q \boldsymbol{\Lambda} \mathbf Q^H, \nonumber
\end{equation}
and the $M$ dominant eigenvectors of $\widehat{\mathbf R}_q$ are retained.

The optimum $M$-dimensional subspace is obtained from $\mathbf F_M^{\star} = \mathbf Q_M$, where $\mathbf Q_M$ contains the $M$ eigenvectors corresponding to the largest eigenvalues. The associated empirical projection error is
\begin{equation}
\varepsilon_{\mathrm{PCA}}^2(M)=
\frac{1}{N_{\mathrm{obs}}}
\sum_{r=1}^{N_{\mathrm{obs}}}
\left\|
\mathbf q^{(r)}-
\mathbf Q_M\mathbf Q_M^H\mathbf q^{(r)}
\right\|_2^2=
\sum_{m=M+1}^{N}\lambda_m .
\nonumber
\end{equation}

The dominant eigenvectors characterize the statistically significant IIR responses, but they are not themselves constrained to the first-order IIR form generated by the recurrent neurons. Therefore, each eigen mode is further approximated by a $K$th-order rational polynomial function~\cite{gustavsen1999rational}
\begin{equation}
f_m(z) \approx \widehat f_m(z) =
\sum_{k=1}^{K}
\frac{a_{m,k}}
{1-p_{m,k}z^{-1}},
\qquad
|p_{m,k}|<1, \nonumber
\end{equation}
where $f_m(z)$ denotes the transfer function associated with the $m$-th dominant eigenvector. Retaining $M$ dominant modes and assigning $K$ first-order IIR terms to each mode yields $N_n=MK$ configured recurrent weights. Their poles and residues determine the fixed recurrent configuration through $\mathbf W_{\mathrm{res}}$ and the input weight $\mathbf W_{\mathrm{in}}$.

\subsection{Approximation Error Bound}
\label{subsec:approximation_error_bound}

In practical, the weight configuration is obtained from $N_{\mathrm{obs}}$ estimated channel realizations. We therefore characterize the approximation capability of the configured skip-RNN using the corresponding empirical mean-squared $L_2$ error.

For channel realization $r$, let $Q_{\Delta}^{(r)}(z)$ denote the IIR component of the causal LMMSE filter and $\widetilde Q_{\Delta,M}^{(r)}(z)$ its reconstruction using the $M$ dominant PCA modes. The empirical PCA truncation error is
\begin{equation}
\label{eq:eps_pca}
\widehat{\varepsilon}_{\mathrm{PCA}}^{2}(M)
:=
\frac{1}{N_{\mathrm{obs}}}
\sum_{r=1}^{N_{\mathrm{obs}}}
\left\|
Q_{\Delta}^{(r)}(z)
-
\widetilde Q_{\Delta,M}^{(r)}(z)
\right\|_2^2.
\end{equation}
When the PCA is performed using the empirical second-moment matrix $\widehat{\mathbf R}_q$, this error is equivalently given by the sum of the discarded eigenvalues.

Next, let $\widehat Q_{N_n}^{(r)}(z)$ denote the best representation of
$\widetilde Q_{\Delta,M}^{(r)}(z)$ using the $N_n=MK$ configured recurrent basis functions. The corresponding empirical recurrent-basis approximation error is
\begin{equation}
\label{eq:eps_rational_fit}
\widehat{\varepsilon}_{\mathrm{RP}}^{2}(M,K)
:=
\frac{1}{N_{\mathrm{obs}}}
\sum_{r=1}^{N_{\mathrm{obs}}}
\min_{\mathbf W_{\mathrm{out,s}}^{(r)}}
\left\|
\widetilde Q_{\Delta,M}^{(r)}(z)
-
\widehat Q_{N_n}^{(r)}(z)
\right\|_2^2 ,
\end{equation}
where the recurrent basis functions are fixed by the configured
$\mathbf W_{\mathrm{res}}$ and $\mathbf W_{\mathrm{in}}$, while
$\mathbf w_{\mathrm{IIR}}^{(r)}$ contains their realization-specific linear-combination coefficients.

The total empirical recurrent approximation error with respect to the original IIR target is
\begin{equation}
\label{eq:eps_iir_bound}
\widehat{\varepsilon}_{\mathrm{IIR}}^{2}(N_n)
:=
\frac{1}{N_{\mathrm{obs}}}
\sum_{r=1}^{N_{\mathrm{obs}}}
\min_{\mathbf W_{\mathrm{out,s}}^{(r)}}
\left\|
Q_{\Delta}^{(r)}(z)
-
\widehat Q_{N_n}^{(r)}(z)
\right\|_2^2 .
\end{equation}
By the triangle inequality followed by the empirical root-mean-square inequality,
\begin{equation}
\label{eq:eps_iir_decomposition_squared}
\widehat{\varepsilon}_{\mathrm{IIR}}^{2}(N_n)
\leq
\left[
\sqrt{
\widehat{\varepsilon}_{\mathrm{PCA}}^{2}(M)
}
+
\sqrt{
\widehat{\varepsilon}_{\mathrm{RP}}^{2}(M,K)
}
\right]^2 .
\end{equation}
Hence, the recurrent-branch approximation error consists of the PCA truncation error due to retaining only $M$ dominant modes and the additional error incurred when representing the retained low-dimensional target using a finite configured recurrent basis.

For the FIR branch, recall from Proposition~\ref{prop:mmse_iir_fir_structure} that, for realization $r$,
\begin{equation}
W_{\mathrm c}^{(\Delta,r)}(z)
=
Q_{\Delta}^{(r)}(z)
+
P_{\Delta}^{(r)}(z^{-1}),
\end{equation}
where
\begin{equation}
P_{\Delta}^{(r)}(z^{-1})
=
\sum_{d=0}^{D_{\Delta}-1}
f_{d,\Delta}^{(r)}z^{-d}
\end{equation}
is the finite causal FIR component. The $N_w$-tap skip branch realizes
\begin{equation}
\label{eq:F_hat_Nw}
\widehat P_{N_w}^{(r)}(z^{-1})
=
\sum_{d=0}^{N_w-1}
\widehat f_d^{(r)}z^{-d}.
\end{equation}
Its empirical approximation error is defined as
\begin{equation}
\label{eq:eps_fir_bound}
\widehat{\varepsilon}_{\mathrm{FIR}}^{2}(N_w,\Delta)
:=
\frac{1}{N_{\mathrm{obs}}}
\sum_{r=1}^{N_{\mathrm{obs}}}
\min_{\mathbf W_{\mathrm{out,w}}^{(r)}}
\left\|
P_{\Delta}^{(r)}(z^{-1})
-
\widehat P_{N_w}^{(r)}(z^{-1})
\right\|_2^2 .
\end{equation}

Since the skip branch spans the complete causal delay basis
$\{1,z^{-1},\ldots,z^{-(N_w-1)}\}$, the first $N_w$ FIR coefficients can be represented exactly. Therefore, for $N_w<D_{\Delta}$,
\begin{equation}
\label{eq:eps_causal_tail}
\widehat{\varepsilon}_{\mathrm{FIR}}^{2}(N_w,\Delta)
=
\frac{1}{N_{\mathrm{obs}}}
\sum_{r=1}^{N_{\mathrm{obs}}}
\left\|
\sum_{d=N_w}^{D_{\Delta}-1}
f_{d,\Delta}^{(r)}z^{-d}
\right\|_2^2 .
\end{equation}
If $N_w\geq D_{\Delta}$, the FIR component lies entirely in the skip space and
\begin{equation}
\widehat{\varepsilon}_{\mathrm{FIR}}^{2}(N_w,\Delta)=0.
\end{equation}

\begin{proposition}[Empirical mixed IIR--FIR approximation bound]
\label{prop:mixed_iir_fir_approximation_bound}
For the $N_{\mathrm{obs}}$ channel realizations used to construct the configured skip-RNN, the minimum empirical mean-squared $L_2$ approximation error satisfies
\begin{align}
\label{eq:wesn_approx_bound}
&
\frac{1}{N_{\mathrm{obs}}}
\sum_{r=1}^{N_{\mathrm{obs}}}
\min_{\mathbf W_{\mathrm{out}}^{(r)}}
\left\|
G_{\mathrm{SR}}^{(r)}(z)
-
W_{\mathrm c}^{(\Delta,r)}(z)
\right\|_2^2
\nonumber\\
&\qquad\leq
\left[
\sqrt{
\widehat{\varepsilon}_{\mathrm{IIR}}^{2}(N_n)
}
+
\sqrt{
\widehat{\varepsilon}_{\mathrm{FIR}}^{2}(N_w,\Delta)
}
\right]^2 .
\end{align}
\end{proposition}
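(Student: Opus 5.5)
The plan is to bound the minimum over the full readout by a particular, suboptimal choice of readout, and then to split the error into its IIR and FIR parts. For each realization $r$, use the decomposition $W_{\mathrm c}^{(\Delta,r)}(z)=Q_{\Delta}^{(r)}(z)+P_{\Delta}^{(r)}(z^{-1})$ from Proposition~\ref{prop:mmse_iir_fir_structure}. The skip-RNN transfer function \eqref{eq:skip_rnn_mixed_iir_fir_transfer} likewise splits as $G_{\mathrm{SR}}^{(r)}(z)=\widehat Q_{N_n}^{(r)}(z)+\widehat P_{N_w}^{(r)}(z^{-1})$. Here the first term depends only on $\mathbf W_{\mathrm{out,s}}^{(r)}$ and the second only on $\mathbf W_{\mathrm{out,w}}^{(r)}$, because the readout partitions as $\mathbf W_{\mathrm{out}}=[\mathbf W_{\mathrm{out,s}}\;\mathbf W_{\mathrm{out,w}}]$. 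Since these two blocks can be chosen independently, I would fix them as follows:
\begin{itemize}
\item $\mathbf W_{\mathrm{out,s}}^{(r)}$ is a minimizer in \eqref{eq:eps_iir_bound}, i.e. the best recurrent-basis fit to $Q_{\Delta}^{(r)}$;
\item $\mathbf W_{\mathrm{out,w}}^{(r)}$ is a minimizer in \eqref{eq:eps_fir_bound}, i.e. the best fit to $P_{\Delta}^{(r)}$.
\end{itemize}
Both minima are attained: each is a least-squares projection onto a finite-dimensional subspace of $L_2$ on the unit circle, which is closed. The joint minimum over $\mathbf W_{\mathrm{out}}^{(r)}$ is then at most the error of this particular choice.

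Next, for each $r$ I would apply the triangle inequality in $L_2$:
\begin{equation}
\min_{\mathbf W_{\mathrm{out}}^{(r)}}\|G_{\mathrm{SR}}^{(r)}-W_{\mathrm c}^{(\Delta,r)}\|_2\le a_r+b_r,\nonumber
\end{equation}
where $a_r$ is the per-realization IIR error $\min\|Q_{\Delta}^{(r)}-\widehat Q_{N_n}^{(r)}\|_2$ and $b_r$ is the per-realization FIR error $\min\|P_{\Delta}^{(r)}-\widehat P_{N_w}^{(r)}\|_2$. Squaring and averaging over $r$ gives
\begin{equation}
\frac1{N_{\mathrm{obs}}}\sum_r (a_r+b_r)^2=\bar{a^2}+2\,\overline{ab}+\bar{b^2}.\nonumber
\end{equation}
The cross term is controlled by Cauchy--Schwarz on the empirical average, $\overline{ab}\le\sqrt{\bar{a^2}}\sqrt{\bar{b^2}}$. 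This is the same empirical root-mean-square (Minkowski) step used for \eqref{eq:eps_iir_decomposition_squared}. The result is $\big(\sqrt{\bar{a^2}}+\sqrt{\bar{b^2}}\big)^2$. By definition $\bar{a^2}=\widehat\varepsilon_{\mathrm{IIR}}^2(N_n)$ and $\bar{b^2}=\widehat\varepsilon_{\mathrm{FIR}}^2(N_w,\Delta)$, which is exactly \eqref{eq:wesn_approx_bound}.

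The main obstacle is not the inequality chain but justifying that the two branches can be optimized separately. This needs the readout blocks to act on disjoint feature sets, so that $\widehat Q_{N_n}^{(r)}$ and $\widehat P_{N_w}^{(r)}$ vary independently. It also needs all norms to be taken in the same $L_2(\mathbb T)$ (or sampled-frequency) metric used in \eqref{eq:eps_iir_bound} and \eqref{eq:eps_fir_bound}. I would state explicitly that the bound does not require the IIR and FIR subspaces to be orthogonal, since no orthogonality is used. Overlap between the subspaces only helps, because the joint minimization can exploit it and can therefore only decrease the left-hand side.
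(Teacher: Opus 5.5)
Your proposal is correct and follows the paper's proof essentially step for step. Like the paper, you use the two branchwise minimizers together as a feasible joint readout (justified by the disjoint blocks $\mathbf W_{\mathrm{out,s}}$ and $\mathbf W_{\mathrm{out,w}}$), apply the triangle inequality per realization, then square, average over the $N_{\mathrm{obs}}$ realizations, and bound the cross term by Cauchy--Schwarz; your extra remarks on attainment of the minima and on not needing orthogonality of the IIR and FIR subspaces are accurate but add nothing the argument requires.
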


\begin{proof}
See Appendix~\ref{append_prop_mixed_iir_fir_approximation_bound}.
\end{proof}

The bound separates the approximation loss into two architectural contributions: the error associated with representing the IIR target using the finite configured recurrent basis and the truncation error associated with the finite skip window. These empirical errors characterize the representation capability of the configured receiver over the observed channel ensemble and are distinct from the finite-pilot estimation error incurred during online readout adaptation.

\subsection{Online Readout Adaptation with Prior FIR weights}
\label{subsec:weight_configuration_online_adaptation}

The recurrent parameters of skip-RNN $\left\{\mathbf W_{\mathrm{res}},\mathbf W_{\mathrm{in}}\right\}$ determine the fixed IIR basis, while $N_w$ determines the complete FIR delay basis. The ensemble-averaged matrix $\mathbf W_{\mathrm{out,w}}^{(0)}$ does not change the FIR basis itself; instead, it provides a statistically informed initialization and regularization center for the FIR readout.

After offline configuration, $\mathbf W_{\mathrm{res}}$ and $\mathbf W_{\mathrm{in}}$ remain fixed. Within each TTI, only the readout is adapted from the available reference symbols. Partition it as
\begin{equation}
\mathbf W_{\mathrm{out}}
=
\begin{bmatrix}
\mathbf W_{\mathrm{out,s}}
&
\mathbf W_{\mathrm{out,w}}
\end{bmatrix},
\end{equation}
where the first block combines the recurrent features and the second combines the FIR skip features.

Let $\mathbf Z$ denote the skip-augmented feature matrix and $\mathbf X$ the corresponding known target-symbol matrix. The realization-specific readout is obtained from
\begin{align}
\label{eq:readout_prior}
\widehat{\mathbf W}_{\mathrm{out}}
=
\operatorname*{arg\,min}_{\mathbf W_{\mathrm{out}}}
\;&
\left\|
\mathbf W_{\mathrm{out}}\mathbf Z-\mathbf X
\right\|_F^2 +
\beta
\left\|
\mathbf W_{\mathrm{out,s}}
\right\|_F^2
\nonumber\\
&+
\alpha
\left\|
\mathbf W_{\mathrm{out,w}}
-
\mathbf W_{\mathrm{out,w}}^{(0)}
\right\|_F^2.
\end{align}

In Section~\ref{subsec:low_dim_iir}, for each realization $r$, the causal LMMSE filter is decomposed as in \eqref{eq:causal_mmse_realization_decomp}, yielding the finite-delay coefficients
$\{\mathbf F_{d,\Delta}^{(r)}\}_{d=0}^{N_w-1}$
of $P_{\Delta}^{(r)}(z^{-1})$. The prior FIR readout is then the ensemble mean of these taps,
\begin{equation}
\mathbf W_{\mathrm{out,w}}^{(0)}
=
\begin{bmatrix}
\bar{\mathbf F}_0 &
\bar{\mathbf F}_1 &
\cdots &
\bar{\mathbf F}_{N_w-1}
\end{bmatrix};
\;
\bar{\mathbf F}_d
=
\frac{1}{N_{\mathrm{obs}}}
\sum_{r=1}^{N_{\mathrm{obs}}}
\mathbf F_{d,\Delta}^{(r)}.\nonumber
\end{equation}
This prior centers the regularized readout in \eqref{eq:readout_prior} on the statistically expected FIR correction for NMP channels, while the per-TTI least-squares fit adapts $\mathbf W_{\mathrm{out,w}}$ from $\mathbf W_{\mathrm{out,w}}^{(0)}$ by the observed pilots.

Then the closed-form solution is
\begin{equation}
\label{eq:readout_closed_form_prior}
\widehat{\mathbf W}_{\mathrm{out}}
=
\left(
\mathbf X\mathbf Z^H
+
\mathbf W_0
\boldsymbol\Lambda_{\mathrm{reg}}
\right)
\left(
\mathbf Z\mathbf Z^H
+
\boldsymbol\Lambda_{\mathrm{reg}}
\right)^{-1},
\end{equation}

where, $\boldsymbol\Lambda_{\mathrm{reg}} = \operatorname{blkdiag} \left( \beta\mathbf I_{N_n}, \alpha\mathbf I_{N_rN_w} \right)$, and $\mathbf W_0 = [\mathbf 0 \quad \mathbf W_{\mathrm{out,w}}^{(0)}]$. 

Weight configuration determines a channel-informed IIR basis and an FIR coefficient prior from the channel ensemble. Online readout training does not alter the
configured poles, spatial input modes, or FIR delay basis; it adapts only their realization-specific linear combination using the reference symbols available in the current TTI.

\subsection{Complexity Analysis}
\label{subsec:complexity}
The computational complexity of baseline LMMSE detectors has been thoroughly analyzed in our prior works~\cite{zhou2020learning, xu2022rc, xu2023detect, 10974467}. We focus on the dominant linear-algebra operations like matrix multiplications, pseudo-inverses and eigen decomposition, since matrix additions and element-wise operations are negligible at the scales considered. We adopt the standard complexity model in which multiplying an $m\times n$ matrix by an $n\times k$ matrix incurs $\mathcal{O}(mnk)$ operations, while computing the pseudo-inverse of an $m\times n$ matrix (with $m\le n$) via singular value decomposition (SVD) requires $\mathcal{O}(mn^2)$ operations~\cite{xu2022rc}. 

The total complexity of computing configured weights is the sum of the cost for all the steps which is dominated by the eigen decomposition cost $\mathcal{O}(N_{\mathrm{sc}}^{3})$ in the worst case. Since the above configuration can be computed once and reused for same $E_b/N_o$, its amortized per-iteration overhead over $N_{\mathrm{run}}$ runtime iterations is $\mathrm{FLOPs}_{\mathrm{cfg}}/N_{\mathrm{run}}$, which becomes negligible for long simulations/training runs.

After the weights are configured, the training and inference complexity of the neural receiver remains identical to that of a randomly initialized model. The overall training process involves two main steps: (i) the state-update cost and (ii) the readout weight estimation cost.
Generating reservoir states over $N_{\mathrm{train}}$ samples requires $\mathcal{O}\!\big((N_n+N_r)N_nN_{\mathrm{train}}\big),$ while estimating the readout weights ${\mathbf{W}}_{\mathrm{out}}$ with extended states via least-squares/pseudo-inverse incurs
\begin{equation}
\mathcal{O}\!\big(N_{\mathrm{train}}^2(N_n+N_r N_w) + N_tN_{\mathrm{train}}(N_n+N_r N_w)\big). \nonumber
\end{equation}
Combining both terms, the overall training complexity is
\begin{equation}
\mathcal{O}\!\big(N_{train}((N_n+N_r)N_n + (N_n+N_r N_w)(N_{\mathrm{train}}+N_t))\big). \nonumber
\end{equation}

At test time, the forward pass involves the state update and the output computation using readout weights. Therefore, the testing complexity of RC scales as
\begin{equation}
\mathcal{O}\!\big(N_{test}(N_n(N_n+N_r)+ N_t(N_n + N_r N_w))\big). \nonumber
\end{equation}

\section{Performance Evaluation}
\label{sec:perform_eval}
In this section, we evaluate the performance of the configured skip-RNN neural receiver in a MIMO-OFDM system and compare it with state-of-the-art model-based receivers and learning-based approaches.
\subsection{Simulation Setup}
\label{subsec:system_model}

\begin{figure}[!ht]
\centering
\includegraphics[width=\linewidth]{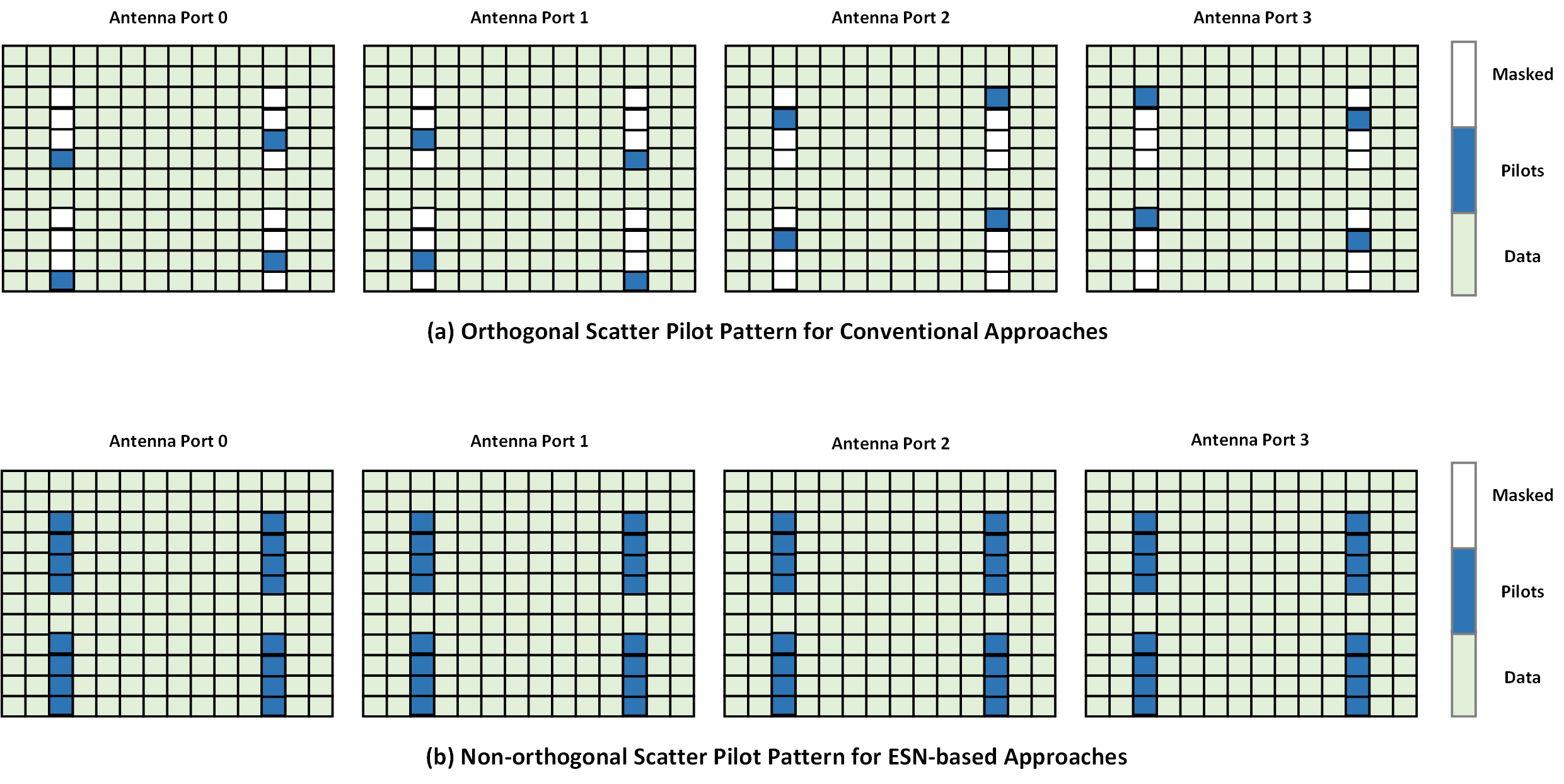}    
\caption{Scatter Pilot Pattern for $4\times 4$ MIMO-OFDM in one resource block}
\label{fig:pilot_pattern}
\end{figure}


We consider a system consisting of a single base station (BS) equipped with four transmit antennas and a single user terminal (UT) equipped with four receive antennas. The user speed is set to 30 km/h with carrier frequency $f_c$ = 3.8 GHz, corresponding to band n78 in 5G NR~\cite{3gpp_carrier_freq}. Both the BS and UT employ dual-polarized antenna arrays compliant with the 3GPP TR~38.901 specification~\cite{3gpp_channel}, using vertical and horizontal polarizations at a carrier frequency of \(3.8\,\mathrm{GHz}\). 

Each TTI comprises 14 OFDM symbols with a subcarrier spacing of \(30\,\mathrm{kHz}\) and an FFT size of 1024. In accordance with the 3GPP NR specification, demodulation reference signals (DMRS) are designed following Mapping Type~A with configuration type~2. 
The resulting pilot pattern forms a scattered, orthogonal structure across frequency, which is repeated over the two DMRS symbols at OFDM indices $2$ and $11$, shown in Fig.~\ref{fig:pilot_pattern}(a). The neural receiver does not require conventional orthogonal pilot design that allocates dedicated pilot resources per spatial layer; rather, it uses a non-orthogonal pilot pattern shown in Fig.~\ref{fig:pilot_pattern}(b), since it does not need to perform joint channel estimation and demodulation. This 3GPP-compliant DMRS configuration enables support for four spatial layers using a combination of CDM groups and time-domain separation while enabling evaluation of the introduced detector under realistic, pilot-scarce scenarios.

\begin{table}[!t]
\centering
\caption{Simulation Parameters}
\label{tab:sim_params}
\vspace{-0.06in}
\scriptsize
\setlength{\tabcolsep}{2.5pt}
\renewcommand{\arraystretch}{1.08}
\begin{tabularx}{\columnwidth}{
@{}>{\raggedright\arraybackslash}X
>{\centering\arraybackslash}p{0.20\columnwidth}
>{\raggedright\arraybackslash}X
>{\centering\arraybackslash}p{0.20\columnwidth}@{}}
\toprule
\textbf{Parameter} & \textbf{Value} &
\textbf{Parameter} & \textbf{Value} \\
\midrule
MIMO configuration & \(4\times4\) &
Spatial layers & \(4\) \\

Carrier frequency & \(3.8\,\mathrm{GHz}\) &
Subcarrier spacing & \(30\,\mathrm{kHz}\) \\

FFT size & \(1024\) &
Effective RBs & \(16\) \\

OFDM symbols & \(14\) &
Pilot symbols & \(2\) \\

Modulation & 16-/64-QAM &
Cyclic-prefix length & \(128\) samples \\

UT speed & \(30\,\mathrm{km/h}\) &
Channel models & CDL-B/C \\

Delay spread & \(100 \; and \; 300\,\mathrm{ns}\) &
Doppler spread & \(100\,\mathrm{Hz}\) \\

Antenna pattern & 3GPP TR 38.901 &
Polarization & Dual (V/H) \\

\bottomrule
\end{tabularx}
\vspace{-0.08in}
\end{table}

\begin{table}[!t]
\centering
\caption{Neural Receiver Parameters}
\label{tab:rc_params}
\vspace{-0.06in}
\scriptsize
\setlength{\tabcolsep}{2.5pt}
\renewcommand{\arraystretch}{1.08}
\begin{tabularx}{\columnwidth}{
@{}>{\raggedright\arraybackslash}X
>{\centering\arraybackslash}p{0.18\columnwidth}
>{\raggedright\arraybackslash}X
>{\centering\arraybackslash}p{0.18\columnwidth}@{}}
\toprule
\textbf{Parameter} & \textbf{Value} &
\textbf{Parameter} & \textbf{Value} \\
\midrule
Recurrent neurons, \(N_n\) & \(32\) &
Sparsity & \(0.4\) \\

Spectral radius & \(0.5\) &
Input scaling & \(0.8\) \\

Window length, \(N_w\) & \(32\) &
Top eigenvectors, \(M\) & \(4\) \\

RP order, \(K\) & \(8\) &
Activation function, \(\sigma(\cdot)\) & \(Id\) \\
\bottomrule
\end{tabularx}
\vspace{-0.08in}
\end{table}

Wireless propagation is modeled using standardized 3GPP NLoS CDL channel models including CDL-B with $100 ns$ delay spread and CDL-C with $300 ns$ delay spread~\cite{3gpp_channel} implemented in Sionna~\cite{sionna}. 
The cyclic prefix length is set to 128 to accommodate channels with large delay spreads, and the simulation is performed over 16 RBs. The detailed simulation parameters are listed in Table~\ref{tab:sim_params}.


\begin{figure*}[t]
\vspace{-0.1in}
    \centering
    \subfloat{
        \includegraphics[width=0.425\linewidth]{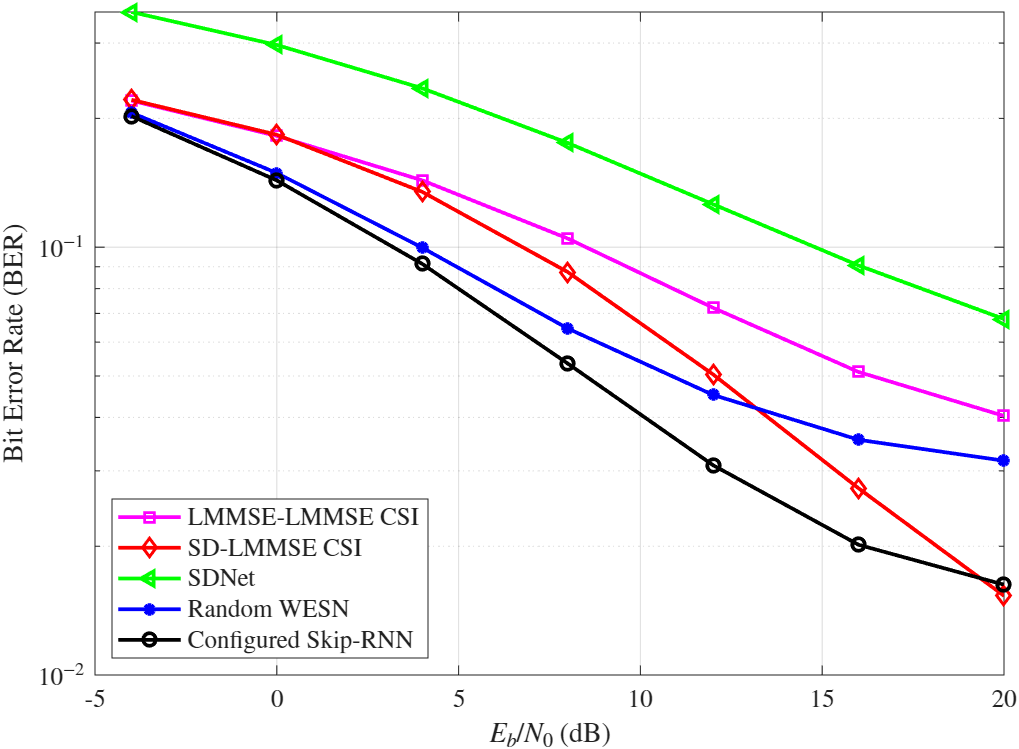}
    }\hfill
    \subfloat{
        \includegraphics[width=0.425\linewidth]{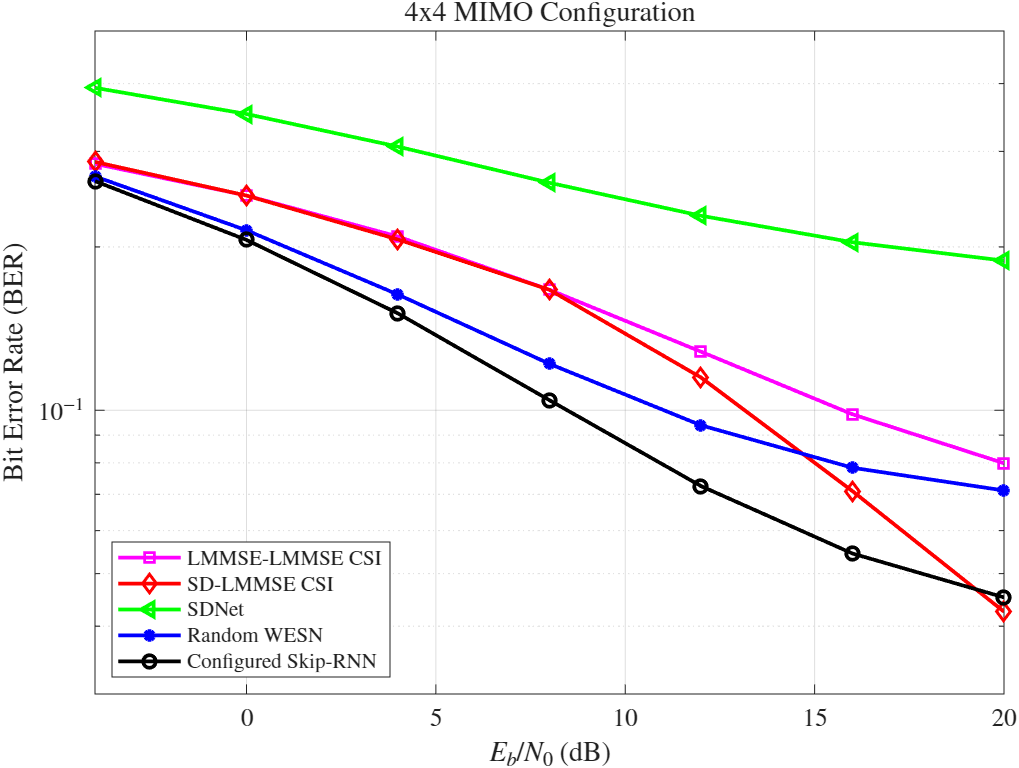}
    }
    \caption{BER versus $E_b/N_0$ for different detectors under CDL-B channel for (a) 16-QAM and (b) 64-QAM.}
    \label{fig:cdlB_ber16_64}
    \vspace{-0.05in}

\end{figure*}

In the experimental setup, the skip-RNN model utilizes $N_n = 32$ recurrent neurons and a window length of $N_w = 32$. For weight configuration, we use 20 slots of estimated frequency domain channel realizations to compute the empirical covariance of the inverse channel responses. The top $M = 4$ eigenvectors are extracted and approximated using RP of order $K = 8$. The complete set of neural receiver parameters is summarized in Table~\ref{tab:rc_params}.

\subsection{Baseline Receive Processing Approach}
We are benchmarking our configured neural receiver against model-based LMMSE and SD receivers and learning-based SDNet and WESN detector with random weights:

\begin{itemize}
\item \textbf{LMMSE + LMMSE-CSI:} This baseline applies two-stage processing: a)LMMSE channel estimation with 2D LMMSE interpolation, and b)LMMSE receive processing. LMMSE channel estimation technique consists of two steps:(i) LS-based estimation of the channel at pilot REs, followed by (ii) 2D LMMSE interpolation across the time–frequency grid to estimate the full channel matrix. The estimated channel is then used for LMMSE-based detection. While computationally demanding due to the covariance estimation and 2D LMMSE interpolation, this pipeline is a strong model-based baseline with a favorable performance-complexity trade-off~\cite{cammerer2023neural}. 

\item \textbf{SD + LMMSE-CSI:} This baseline uses the same LMMSE-based channel estimate as the linear receiver, but replaces LMMSE detection with sphere decoding (SD). For each data RE, SD searches the discrete symbol space to minimize the Euclidean detection metric, providing near-ML detection when the channel estimate is sufficiently accurate~\cite{ghasemmehdi2011faster}.

\item \textbf{SDNet:} SDNet is included as a learning-based MIMO detector that refines an initial model-based symbol estimate. Rather than directly processing the received waveform, the network operates on two intermediate quantities: the zero-forcing (ZF) symbol estimate, $\widehat{\mathbf{x}}_{\mathrm{ZF}}$, and an LS estimate of the channel matrix, $\widehat{\mathbf{H}}_{\mathrm{LS}}$. The real and imaginary components of these quantities are rearranged and combined to form the network input. A sequence of convolutional layers extracts joint information from the coarse symbol estimate and the estimated channel, after which a fully connected output layer produces the refined detected-symbol vector. Unlike our introduced receiver, SDNet relies explicitly on channel estimation and requires offline optimization of its neural-network parameters~\cite{zhang2025efficient}.

\item \textbf{Random WESN:} Reservoir-based online real-time receiver using randomly initialized fixed weights. This model utilizes windowed input and operates on time domain samples and performs direct per-TTI MIMO detection without requiring explicit channel estimation~\cite{zhou2020learning}.

\end{itemize}

\begin{figure}[!ht]
\vspace{-0.15in}
\centering
\includegraphics[width=0.8\linewidth]{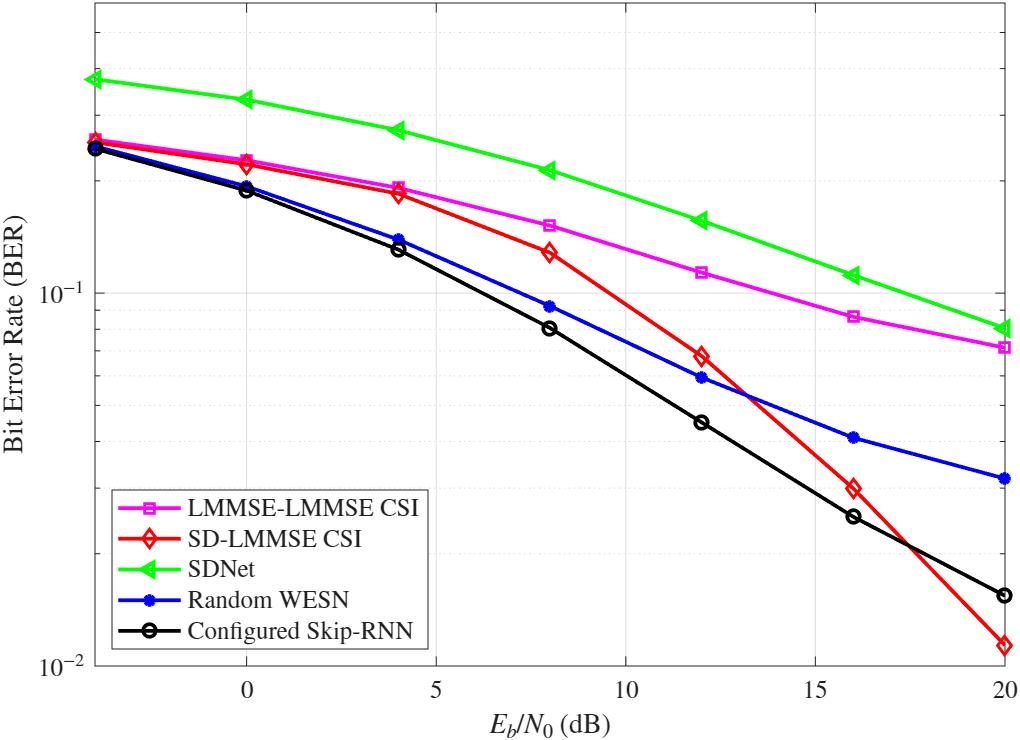}   
\vspace{-0.15in}
\caption{BER performance for 16-QAM under CDL-C channel}
\label{fig:cdlc_16qam}
\end{figure}

\begin{figure}[!ht]
\vspace{-0.15in}
\centering
\includegraphics[width=0.8\linewidth]{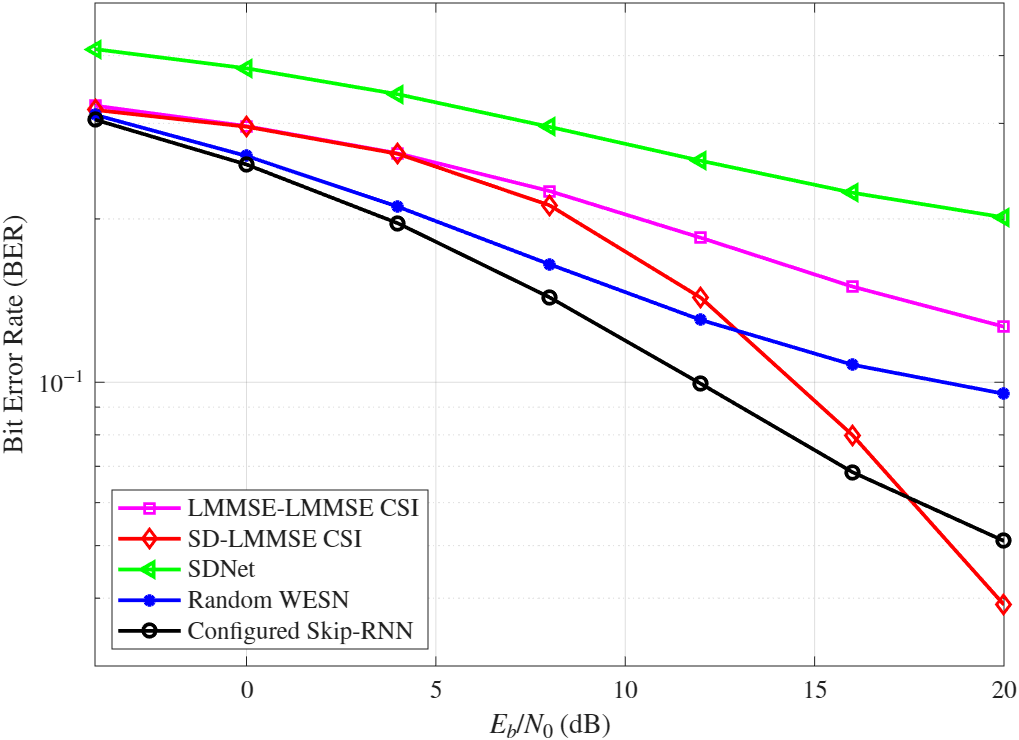}   
\caption{BER performance for 64-QAM under CDL-C channel}
\vspace{-0.15in}
\label{fig:cdlc_64qam}
\end{figure}

\vspace{-0.15in}
\subsection{BER Performance Comparison}
The BER performance of the considered receivers is evaluated over the 3GPP NLoS CDL-B and CDL-C channel models. Figure~\ref{fig:cdlB_ber16_64} presents the results for CDL-B with 16-QAM and 64-QAM, while Figures~\ref{fig:cdlc_16qam} and~\ref{fig:cdlc_64qam} report the corresponding results for CDL-C. Across both channel models and modulation orders, the proposed configured skip-RNN achieves the lowest BER among the evaluated receivers, followed by the Random WESN. This consistent improvement over the randomly configured reservoir demonstrates the benefit of adapting the recurrent structure to the channel characteristics rather than relying on randomly generated reservoir weights.

Among the receivers that rely on estimated CSI, SDNet exhibits the highest BER, while LMMSE + LMMSE-CSI performs better over most of the evaluated SNR range. In the high-SNR regime, SD + LMMSE-CSI outperforms LMMSE + LMMSE-CSI because its nonlinear search over the discrete symbol space more closely realizes the maximum-likelihood (ML) detection criterion. At low SNR, this advantage diminishes as channel-estimation errors distort the detection metric. At sufficiently high SNR, e.g., $E_b/N_0=20$~dB, where the estimated CSI becomes more accurate, SD also outperforms the configured skip-RNN. This behavior is expected since the SD algorithm solves the ML detection problem for a given channel realization and discrete symbol alphabet~\cite{vikalo2005sphere}.

The performance difference between the configured skip-RNN and the Random WESN also becomes smaller in the low-$E_b/N_0$ regime. Under severe noise, the channel information used during weight configuration becomes less reliable, limiting the benefit of channel-informed weight configuration. In high SNR, the configured reservoir provides a more suitable recurrent representation of the underlying channel dynamics, resulting in better BER performance over the Random WESN. Overall, the results indicate that the proposed configuration improves the robustness of our neural receiver across different NLoS delay profiles and higher-order modulation schemes without requiring an explicit CSI estimation.

\section{Conclusion}
\label{sec:conclusion}

This work introduced an interpretable skip-RNN receiver for MIMO-OFDM under mixed-phase NLoS channels. We showed that the skip-RNN operates as a causal deconvolution filter whose recurrent and skip branches realize complementary IIR and FIR basis functions, matching the structure of the causal LMMSE channel inverse filter. This interpretation motivated a channel-informed configuration framework in which historical estimated channel statistics are used to configure the recurrent dynamics, determine the FIR memory, and construct a statistical readout prior for prior-centered regularized online adaptation. An approximation-error analysis further characterized the representational capability of the configured receiver. Simulation results over standardized 3GPP NLoS mixed-phase channels demonstrated that the weight configuration improves detection performance over conventional model-based and learning-based baselines under limited pilot overhead.

\vspace{-0.15in}

\section*{Acknowledgment} 

The authors would like to thank Shashank Jere for insightful discussions and for proofreading the manuscript.

\vspace{-0.1in}

\appendices

\section{PROOF OF PROPOSITION~\ref{prop:mmse_iir_fir_structure}}
\label{append_prop_mmse_iir_fir_structure}

\begin{proof}
From \eqref{eq:causal_mmse_iir_fir_cascade}, the causal LMMSE channel inverse is
\begin{equation}
W_{\mathrm{c}}^{(\Delta)}(z)
=
\frac{F_{\Delta}(z)}{\Phi(z)},\nonumber
\end{equation}
where $F_{\Delta}(z) = \sum_{k=0}^{\Delta} r[-k]z^{-(\Delta-k)}$ is a finite-degree polynomial in $z^{-1}$.

Since $\Phi(z)$ is the canonical minimum-phase spectral factor, $\Phi^{-1}(z)$ is causal and BIBO-stable. By the rationality assumption, $W_{\mathrm{c}}^{(\Delta)}(z)$ is therefore a causal, stable, rational transfer function. Its finite poles are the poles of $\Phi^{-1}(z)$ that are not cancelled by zeros of $F_{\Delta}(z)$. 

Polynomial division separates $W_{\mathrm{c}}^{(\Delta)}(z)$ into a finite polynomial part and a strictly proper rational part:
\begin{equation}
\label{eq:proof_polynomial_division}
W_{\mathrm{c}}^{(\Delta)}(z)
=
P_{\Delta}(z^{-1})
+
Q_{\Delta}(z),
\end{equation}
where $P_{\Delta}(z^{-1})$ is a finite-degree polynomial in $z^{-1}$ and $Q_{\Delta}(z)$ is strictly proper.

Under the distinct-pole assumption, the strictly proper component admits the partial-fraction expansion
\begin{equation}
Q_{\Delta}(z)
=
\sum_{m=1}^{M}
\frac{c_{m,\Delta}}
{1-p_mz^{-1}},
\qquad
|p_m|<1. \nonumber
\end{equation}

$P_{\Delta}(z^{-1})$ can be written as
\begin{equation}
P_{\Delta}(z^{-1})
=
\sum_{d=0}^{D_{\Delta}-1}
f_{d,\Delta}z^{-d},\nonumber
\end{equation}
which is an FIR transfer function. Substitution into~\eqref{eq:proof_polynomial_division} gives
\begin{equation}
W_{\mathrm{c}}^{(\Delta)}(z)
=
\sum_{m=1}^{M}
\frac{c_{m,\Delta}}
{1-p_mz^{-1}}
+
\sum_{d=0}^{D_{\Delta}-1}
f_{d,\Delta}z^{-d},
\qquad
|p_m|<1, \nonumber
\end{equation}
which proves \eqref{eq:unified_parallel_form}.
\end{proof}

\section{Proof of Proposition~\ref{prop:mixed_iir_fir_approximation_bound}}
\label{append_prop_mixed_iir_fir_approximation_bound}
\begin{proof}
For channel realization $r$, the causal LMMSE channel inverse admits the parallel decomposition
\begin{equation}
W_{\mathrm c}^{(\Delta,r)}(z)
=
Q_{\Delta}^{(r)}(z)
+
P_{\Delta}^{(r)}(z^{-1}),
\end{equation}
while the skip-RNN realizes
\begin{equation}
G_{\mathrm{SR}}^{(r)}(z)
=
\widehat Q_{N_n}^{(r)}(z)
+
\widehat P_{N_w}^{(r)}(z^{-1}),
\end{equation}
where $\widehat Q_{N_n}^{(r)}(z)$ and
$\widehat P_{N_w}^{(r)}(z^{-1})$ are controlled by the recurrent and skip readout blocks
$\mathbf W_{\mathrm{out,s}}^{(r)}$ and
$\mathbf W_{\mathrm{out,w}}^{(r)}$, respectively.

Define the minimum branchwise approximation errors as
\begin{align}
e_{\mathrm{IIR}}^{(r)}
&:=
\min_{\mathbf W_{\mathrm{out,s}}^{(r)}}
\left\|
Q_{\Delta}^{(r)}(z)
-
\widehat Q_{N_n}^{(r)}(z)
\right\|_2,
\label{eq:proof_iir_error}\\
e_{\mathrm{FIR}}^{(r)}
&:=
\min_{\mathbf W_{\mathrm{out,w}}^{(r)}}
\left\|
P_{\Delta}^{(r)}(z^{-1})
-
\widehat P_{N_w}^{(r)}(z^{-1})
\right\|_2.
\label{eq:proof_fir_error}
\end{align}

Because the two branches correspond to disjoint blocks of
\begin{equation}
\mathbf W_{\mathrm{out}}^{(r)}
=
\left[
\mathbf W_{\mathrm{out,s}}^{(r)}
\;\;
\mathbf W_{\mathrm{out,w}}^{(r)}
\right],
\end{equation}
their coefficients can be selected independently. Let the two branchwise minimizers in
\eqref{eq:proof_iir_error}--\eqref{eq:proof_fir_error} be used together as a feasible joint readout. Then
\begin{align}
\min_{\mathbf W_{\mathrm{out}}^{(r)}}
\left\|
G_{\mathrm{SR}}^{(r)}(z)
-
W_{\mathrm c}^{(\Delta,r)}(z)
\right\|_2 \leq
e_{\mathrm{IIR}}^{(r)}
+
e_{\mathrm{FIR}}^{(r)}.
\label{eq:proof_per_realization_bound}
\end{align}

Squaring \eqref{eq:proof_per_realization_bound} and averaging over the
$N_{\mathrm{obs}}$ observed channel realizations gives
\begin{align}
&
\frac{1}{N_{\mathrm{obs}}}
\sum_{r=1}^{N_{\mathrm{obs}}}
\min_{\mathbf W_{\mathrm{out}}^{(r)}}
\left\|
G_{\mathrm{SR}}^{(r)}(z)
-
W_{\mathrm c}^{(\Delta,r)}(z)
\right\|_2^2
\nonumber\\
&\quad\leq
\frac{1}{N_{\mathrm{obs}}}
\sum_{r=1}^{N_{\mathrm{obs}}}
\left(
e_{\mathrm{IIR}}^{(r)}
+
e_{\mathrm{FIR}}^{(r)}
\right)^2.
\label{eq:proof_empirical_sum}
\end{align}

Expanding the right-hand side and applying the Cauchy--Schwarz inequality to the cross term yields

\begin{align}
\frac{1}{N_{\mathrm{obs}}}
\sum_{r=1}^{N_{\mathrm{obs}}}
e_{\mathrm{IIR}}^{(r)}
e_{\mathrm{FIR}}^{(r)}
&\leq
\sqrt{
\frac{1}{N_{\mathrm{obs}}}
\sum_{r=1}^{N_{\mathrm{obs}}}
\left(e_{\mathrm{IIR}}^{(r)}\right)^2
}
\nonumber\\
&\quad\times
\sqrt{
\frac{1}{N_{\mathrm{obs}}}
\sum_{r=1}^{N_{\mathrm{obs}}}
\left(e_{\mathrm{FIR}}^{(r)}\right)^2
}.
\end{align}

By the definitions of the empirical branchwise approximation errors,
\begin{equation}
\widehat{\varepsilon}_{\mathrm{IIR}}^{2}(N_n)
=
\frac{1}{N_{\mathrm{obs}}}
\sum_{r=1}^{N_{\mathrm{obs}}}
\left(e_{\mathrm{IIR}}^{(r)}\right)^2,
\end{equation}
and
\begin{equation}
\widehat{\varepsilon}_{\mathrm{FIR}}^{2}(N_w,\Delta)
=
\frac{1}{N_{\mathrm{obs}}}
\sum_{r=1}^{N_{\mathrm{obs}}}
\left(e_{\mathrm{FIR}}^{(r)}\right)^2.
\end{equation}
Therefore,
\begin{align}
&
\frac{1}{N_{\mathrm{obs}}}
\sum_{r=1}^{N_{\mathrm{obs}}}
\left(
e_{\mathrm{IIR}}^{(r)}
+
e_{\mathrm{FIR}}^{(r)}
\right)^2
\nonumber\\
&\quad\leq
\left[
\sqrt{
\widehat{\varepsilon}_{\mathrm{IIR}}^{2}(N_n)
}
+
\sqrt{
\widehat{\varepsilon}_{\mathrm{FIR}}^{2}(N_w,\Delta)
}
\right]^2.
\end{align}

Combining this result with \eqref{eq:proof_empirical_sum} gives
\begin{align}
&
\frac{1}{N_{\mathrm{obs}}}
\sum_{r=1}^{N_{\mathrm{obs}}}
\min_{\mathbf W_{\mathrm{out}}^{(r)}}
\left\|
G_{\mathrm{SR}}^{(r)}(z)
-
W_{\mathrm c}^{(\Delta,r)}(z)
\right\|_2^2
\nonumber\\
&\quad\leq
\left[
\sqrt{
\widehat{\varepsilon}_{\mathrm{IIR}}^{2}(N_n)
}
+
\sqrt{
\widehat{\varepsilon}_{\mathrm{FIR}}^{2}(N_w,\Delta)
}
\right]^2,
\end{align}
which proves~\eqref{eq:wesn_approx_bound}.
\end{proof}

\bibliography{Bibliography/IEEEabrv, Bibliography/sample}

@STRING{IEEE_J_WCOM       = "{IEEE} Trans. Wireless Commun."}

@article{ghasemmehdi2011faster,
  title={Faster recursions in sphere decoding},
  author={Ghasemmehdi, Arash and Agrell, Erik},
  journal={IEEE Trans. Inf. Theory},
  volume={57},
  number={6},
  pages={3530--3536},
  year={2011},
  publisher={IEEE}
}

@book{kailath2000linear,
  author    = {Kailath, T. and Sayed, A. H. and Hassibi, B.},
  title     = {Linear Estimation},
  publisher = {Prentice Hall},
  address   = {Upper Saddle River, NJ},
  year      = {2000}
}

@book{sayed2003fundamentals,
  author    = {Sayed, A. H.},
  title     = {Fundamentals of Adaptive Filtering},
  publisher = {Wiley-IEEE Press},
  year      = {2003}
}

@article{gustavsen1999rational,
  title={Rational approximation of frequency domain responses by vector fitting},
  author={Gustavsen, Bjorn and Semlyen, Adam},
  journal={IEEE Trans. Power Del.},
  volume={14},
  number={3},
  pages={1052--1061},
  year={1999},
  publisher={IEEE}
}

@article{ye2017power,
  title={Power of deep learning for channel estimation and signal detection in {OFDM} systems},
  author={Ye, Hao and Li, Geoffrey Ye and Juang, Biing-Hwang},
  journal={IEEE Wireless Commun. Let.},
  volume={7},
  number={1},
  pages={114--117},
  year={2017},
  publisher={IEEE}
}

@article{honkala2021deeprx,
  title={DeepRx: Fully convolutional deep learning receiver},
  author={Honkala, Mikko and Korpi, Dani and Huttunen, Janne MJ},
  journal={IEEE Trans. Wireless Commun.},
  volume={20},
  number={6},
  pages={3925--3940},
  year={2021},
  publisher={IEEE}
}

@article{zhao2021deep,
  title={Deep-waveform: A learned {OFDM} receiver based on deep complex-valued convolutional networks},
  author={Zhao, Zhongyuan and Vuran, Mehmet Can and Guo, Fujuan and Scott, Stephen D},
  journal={IEEE J. Sel. Areas Commun.},
  volume={39},
  number={8},
  pages={2407--2420},
  year={2021},
  publisher={IEEE}
}

@inproceedings{liao2019deep,
  title={Deep neural network symbol detection for millimeter wave communications},
  author={Liao, Yun and Farsad, Nariman and Shlezinger, Nir and Eldar, Yonina C and Goldsmith, Andrea J},
  booktitle={2019 IEEE Global Commun. Conf. (GLOBECOM)},
  pages={1--6}
}

@inproceedings{lyu2020deep,
  title={Deep neural network-based symbol detection for highly dynamic channels},
  author={Lyu, Xuantao and Feng, Wei and Ge, Ning},
  booktitle={2020 IEEE Global Commun. Conf. (GLOBECOM)},
  pages={1--6}
}

@article{yi2020deep,
  title={Deep learning for joint channel estimation and signal detection in {OFDM} systems},
  author={Yi, Xuemei and Zhong, Caijun},
  journal={IEEE Commun. Let.},
  volume={24},
  number={12},
  pages={2780--2784},
  year={2020},
  publisher={IEEE}
}

@article{khani2020adaptive,
  title={Adaptive neural signal detection for massive {MIMO}},
  author={Khani, Mehrdad and Alizadeh, Mohammad and Hoydis, Jakob and Fleming, Phil},
  journal={IEEE Trans. Wireless Commun.},
  volume={19},
  number={8},
  pages={5635--5648},
  year={2020},
  publisher={IEEE}
}

@article{samuel2019learning,
  title={Learning to detect},
  author={Samuel, Neev and Diskin, Tzvi and Wiesel, Ami},
  journal={IEEE Trans. Signal Process.},
  volume={67},
  number={10},
  pages={2554--2564},
  year={2019},
  publisher={IEEE}
}

@inproceedings{he2018model,
  title={A model-driven deep learning network for {MIMO} detection},
  author={He, Hengtao and Wen, Chao-Kai and Jin, Shi and Li, Geoffrey Ye},
  booktitle={2018 IEEE Global Conf. on Signal and Inf. Process (GlobalSIP)},
  pages={584--588}
}

@book{goodfellow2016deep,
  title={Deep learning},
    author={Ian Goodfellow and Yoshua Bengio and Aaron Courville},
    publisher={MIT Press},
    note={\url{http://www.deeplearningbook.org}},
    year={2016}
}

@article{zhang2025efficient,
  title={Efficient Deployment of Deep {MIMO} Detection Using Learngene},
  author={Zhang, Jinya and Guo, Jiajia and Li, Xiangyi and Wen, Chao-Kai and Geng, Xin and Jin, Shi},
  journal={IEEE Trans. Wireless Commun.},
  year={2026},
  volume={25},
  number={},
  pages={4405-4418}
}

@article{liu_magazine,
  title={A Universal Neural Receiver that Learns at the Speed of Wireless},
  author = {Liu, Lingjia and Zheng, Lizhong and Yi, Yang and Calderbank, Robert},
  journal={IEEE Commun. Mag.},
  year={2026}
}

@ARTICLE{letter_ummay,
  author={Khan, Ummay Sumaya and Liu, Lingjia and Jere, Shashank and Zheng, Lizhong and Yi, Yang},
  journal={IEEE Wireless Commun. Let.}, 
  title={Configuring {RNN} Weights for {MIMO}-{OFDM} Receive Processing: Informing {RNN} With Domain Knowledge}, 
  year={2026},
  volume={15},
  number={},
  pages={1712-1716},
  doi={10.1109/LWC.2025.3636058}}

@article{xu2023detect,
  title={Detect to learn: Structure learning with attention and decision feedback for {MIMO}-{OFDM} receive processing},
  author={Xu, Jiarui and Li, Lianjun and Zheng, Lizhong and Liu, Lingjia},
  journal={IEEE Trans. Commun.},
  volume={72},
  number={1},
  pages={146--161},
  year={2023},
  publisher={IEEE}
}

@article{jaeger2001echo,
  title={The “echo state” approach to analysing and training recurrent neural networks-with an erratum note},
  author={Jaeger, Herbert},
  journal={Bonn, Germany: German national research center for information technology gmd technical report},
  volume={148},
  number={34},
  pages={13},
  year={2001},
  publisher={Bonn}
}

@article{zhou2020rcnet,
  title={{RCNet}: Incorporating structural information into deep {RNN} for online {MIMO}-{OFDM} symbol detection with limited training},
  author={Zhou, Zhou and Liu, Lingjia and Jere, Shashank and Zhang, Jianzhong and Yi, Yang},
  journal={IEEE Trans. Wireless Commun.},
  volume={20},
  number={6},
  pages={3524--3537},
  year={2021},
  publisher={IEEE}
}

@article{zhou2020learning,
  title={Learning for detection: {MIMO}-{OFDM} symbol detection through downlink pilots},
  author={Zhou, Zhou and Liu, Lingjia and Chang, Hao-Hsuan},
  journal={IEEE Trans. Wireless Commun.},
  volume={19},
  number={6},
  pages={3712--3726},
  year={2020},
  publisher={IEEE}
}

@article{xu2022rc,
  title={{RC-Struct}: A structure-based neural network approach for {MIMO}-{OFDM} detection},
  author={Xu, Jiarui and Zhou, Zhou and Li, Lianjun and Zheng, Lizhong and Liu, Lingjia},
  journal={IEEE Trans. Wireless Commun.},
  volume={21},
  number={9},
  pages={7181--7193},
  year={2022},
  publisher={IEEE}
}

@article{jere2023channel,
  title={Channel equalization through reservoir computing: A theoretical perspective},
  author={Jere, Shashank and Safavinejad, Ramin and Zheng, Lizhong and Liu, Lingjia},
  journal={IEEE Wireless Commun. Let.},
  volume={12},
  number={5},
  pages={774--778},
  year={2023},
  publisher={IEEE}
}

@ARTICLE{10974467,
  author={Jere, Shashank and Zheng, Lizhong and Said, Karim and Liu, Lingjia},
  journal=IEEE_J_WCOM, 
  title={{Toward xAI: Configuring {RNN} Weights Using Domain Knowledge for {MIMO} Receive Processing}}, 
  year={2025},
  volume={24},
  number={9},
  pages={7581-7597},
  doi={10.1109/TWC.2025.3561242}}

@article{jere2024universal,
  title={Universal approximation of linear time-invariant ({LTI}) systems through {RNN}s: Power of randomness in reservoir computing},
  author={Jere, Shashank and Zheng, Lizhong and Said, Karim and Liu, Lingjia},
  journal={IEEE J. Sel. Topics Signal Process.},
  year={2024},
  publisher={IEEE}
}

@software{sionna,
 title = {Sionna},
 author = {Hoydis, Jakob and Cammerer, Sebastian and {Ait Aoudia}, Fayçal and Nimier-David, Merlin and Maggi, Lorenzo and Marcus, Guillermo and Vem, Avinash and Keller, Alexander},
 note = {https://nvlabs.github.io/sionna/},
 year = {2022},
 version = {1.1.0}
}

@book{oppenheim1999discrete,
  title={Discrete-time signal processing},
  author={Oppenheim, Alan V},
  year={1999},
  publisher={Pearson Education India}
}

@Techreport{3gpp_channel,
author = {3GPP}, 
title = {{Study on channel model for frequencies from 0.5 to 100 GHz}}, 
language = {en},
institution = {{3rd Generation Partnership Project (3GPP)}}, 
number = {{TR 38.901}} ,
note = {{V19.1.0}} ,
type= {{Technical Report (TR),}},
month = {Sep.}, 
year ={2025}, 
url = {{https://www.3gpp.org/ftp/Specs/archive/38_series/38.901/}}
}

@Techreport{3gpp_pilot,
author = {{3GPP}}, 
title = {{NR; Physical channels and modulation}}, 
language = {en},
institution = {{3rd Generation Partnership Project (3GPP)}}, 
number = {{TS 38.211}} ,
note = {{V19.1.0}} ,
type= {{Technical Specification (TS),}},
month = {Sep.}, 
year ={2025}, 
url = {{https://www.3gpp.org/ftp/Specs/archive/38_series/38.211/}}
}

@Techreport{3gpp_mmse,
author = {{3GPP}}, 
title = {{Study on New Radio Access Technology - Physical Layer Aspects}}, 
language = {en},
institution = {{3rd Generation Partnership Project (3GPP)}}, 
number = {{TR 38.802}} ,
note = {{V14.2.0}} ,
type= {{Technical Report (TR),}},
month = {Sep.}, 
year ={2017}, 
url = {{https://www.3gpp.org/ftp/Specs/archive/38_series/38.802/}}
}

@Techreport{3gpp_carrier_freq,
author = {{3GPP}}, 
title = {{5G; NR; Base Station (BS) radio transmission and reception}}, 
language = {en},
institution = {{3rd Generation Partnership Project (3GPP)}}, 
number = {{TS 38.104}} ,
note = {{V19.2.0}} ,
type= {{Technical Report (TS),}},
month = {Sep.}, 
year ={2025}, 
url = {{https://www.3gpp.org/ftp/Specs/archive/38_series/38.104/}}
}

@article{moir2013inverting,
  title={Inverting Non-minimum Phase {FIR} Transfer Functions},
  author={{T. Moir}},
  journal={American J. of Signal Process.},
  volume={3},
  number={4},
  pages={102--105},
  year={2013}
}

@article{moir2007z,
  title={A z-domain transfer function solution to the non-minimum phase acoustic beamformer},
  author={Moir, TJ},
  journal={Int. J. Sys. Sci.},
  volume={38},
  number={7},
  pages={563--575},
  year={2007},
  publisher={Taylor \& Francis}
}

@inproceedings{cammerer2023neural,
  title={A neural receiver for {5G} {NR} multi-user {MIMO}},
  author={Cammerer, Sebastian and A{\"\i}t Aoudia, Fay{\c{c}}al and Hoydis, Jakob and Oeldemann, Andreas and Roessler, Andreas and Mayer, Timo and Keller, Alexander},
  booktitle={2023 IEEE Globecom Workshops (GC Wkshps)},
  pages={329--334}
}

@article{vikalo2005sphere,
  title={On the sphere-decoding algorithm {II}. Generalizations, second-order statistics, and applications to communications},
  author={Vikalo, Haris and Hassibi, Babak},
  journal={IEEE Trans. Signal Proc.},
  volume={53},
  number={8},
  pages={2819--2834},
  year={2005},
  publisher={IEEE}
}
\bibliographystyle{IEEEtran}

\newpage

\vfill

\end{document}